\newtheorem{theorem}{Theorem}[section]
\newtheorem{lemma}[theorem]{Lemma}
\newtheorem{proposition}[theorem]{Proposition}
\newtheorem{corollary}[theorem]{Corollary}
\newtheorem{definition}[theorem]{Definition}
\newenvironment{remark}[1][Remark]{\begin{trivlist}
\item[\hskip \labelsep {\bfseries #1}]}{\end{trivlist}}
\newcommand{\ob}[1]{\left(#1\right)} 
\newcommand{\cb}[1]{\left[#1\right]} 
\newcommand{\E}{\mathbb{E}}
\newcommand{\R}{\mathbb{R}}
\newcommand{\set}[1]{\left\{#1\right\}} 
\newcommand{\Z}{\mathbb{Z}}
\newcommand{\eps}{\varepsilon}
\newcommand*\samethanks[1][\value{footnote}]{\footnotemark[#1]}
\newcommand{\Ec}{\mathcal{E}_{t}(t_1, \ldots, t_m)}
\newcommand{\Ecs}{\mathcal{E}_{\zeta}(t_1, \ldots, t_m)}
\newcommand{\vet}{V_{\mathcal{E}_t}}
\newcommand{\aet}{A_{\mathcal{E}_t}}
\title{Multifocality and recurrence risk:\\ a quantitative model of field cancerization}
\author{
Jasmine Foo${}^{1}$\thanks{Partially supported by NSF grant DMS-1224362} , Kevin Leder ${}^{2}$\samethanks, and Marc D. Ryser${}^3$\thanks{Partially supported by NIH grant R01-GM096190-02}\\
{\small 1. School of Mathematics, and 2. Industrial and Systems Engineering,}\\
{\small University of Minnesota, Minneapolis, MN} \\
{\small 3. Department of Mathematics, Duke University, Durham, NC}\\
\date{\today} }
\begin{document}
\maketitle

\begin{abstract}

Primary tumors often emerge within genetically altered fields of premalignant cells that appear histologically normal but have a high chance of progression to malignancy. Clinical observations have suggested that these premalignant fields pose high risks for emergence of secondary recurrent tumors if left behind after surgical removal of the primary tumor. In this work, we develop a spatio-temporal stochastic model of epithelial carcinogenesis, combining cellular reproduction and death dynamics with a general framework for multi-stage genetic progression to cancer. Using this model, we investigate how macroscopic features (e.g. size and geometry of premalignant fields) depend on microscopic cellular properties of the tissue (e.g.\ tissue renewal rate, mutation rate, selection advantages conferred by genetic events leading to cancer, etc). We develop methods to characterize how clinically relevant quantities such as waiting time until emergence of second field tumors and recurrence risk after tumor resection. We also study the clonal relatedness of recurrent tumors to primary tumors, and analyze how these phenomena depend upon specific characteristics of the tissue and cancer type. This study contributes to a growing literature seeking to obtain a quantitative understanding of the spatial dynamics in cancer initiation.
\end{abstract}

\section{Introduction}
The term `field cancerization' refers to the clinical observation that certain regions of epithelial tissue have an increased risk for the development of multiple synchronous or metachronous  primary tumors. This term originated in 1953 from repeated observations by Slaughter and colleagues of multiple primary oral squamous cell cancers and local recurrences within a single region of tissue \cite{slaughter1953field}.  This phenomenon, also known as the `cancer field effect' has been documented in many organ systems including head and neck (oral cavity, oropharynx, and larynx), lung, vulva, esophagus, cervix, breast, skin, colon, and bladder \cite{braakhuis2003genetic}.  Although the exact underlying mechanisms of the field effect in cancer are not fully understood, recent molecular genetic studies suggest a carcinogenesis model in which clonal expansion of genetically altered cells (possibly with growth advantages) drives the formation of a premalignant field \cite{braakhuis2003genetic,chai2009field}.  This premalignant field, which may develop in the form of one or more expanding patches, forms fertile ground for subsequent genetic transformation events, leading to intermediate cancer fields and eventually clonally diverging neoplastic growths.  The presence of such premalignant fields poses a significant risk for cancer recurrence and progression even after removal of primary tumors.  Importantly,  these fields with genetically altered cells often appear histologically normal and are difficult to detect; thus, mathematical models to predict the extent and evolution of these fields may be useful in guiding treatment and prognosis prediction.

In this work we develop and analyze a mathematical model of the cancer field effect.  This model combines spatial cellular reproduction and death dynamics in an epithelial tissue with a general framework for multi-stage genetic progression to cancer.  Using this model, we investigate how  microscopic cellular properties of the tissue (e.g.\ tissue renewal rate, mutation rate, selection advantages conferred by genetic events leading to cancer, etc) impact the process of field cancerization in a tissue. We develop methods to characterize the waiting time until emergence of second field tumors and recurrence risk after tumor resection.  We also study the clonal relatedness of recurrent tumors to primary tumors by assessing whether local field recurrences (second field tumors) are more likely than distant field recurrences (second primary tumors), and analyze how these phenomena depend upon characteristics of the tissue and cancer type. The methodology developed in this work is generally applicable to understanding the field effect in many epithelial cancers; in follow-up work we will calibrate and apply this framework to study recurrence risks due to field cancerization in specific cancer types.


The current work will contribute to a growing literature on the evolutionary dynamics of cancer initiation, see e.g.\ \cite{Armitage1957, LuMo2002, KomSenNow, Michor2006a, Schwein, IwaMicKomNow, Wodarz2007, DuScSc09, FoLeMi2011, beerenwinkel2007genetic}. In particular, there have been several previous mathematical models studying the stochastic evolutionary process of cancer initiation from spatially structured tissue, e.g.\ \cite{Ko06s, Nowak2003, WilBje72, ThLoStKo10, Ko13, DuMo10s}.  In 1977 Williams and Bjerknes proposed a model of clonal expansion in epithelial tissue \cite{WilBje72}. This model is closely related to the biased voter model from particle systems theory \cite{Liggett2}, and in \cite{BraGri81, BraGri80}  the growth properties and asymptotic shape of the process were established. Komarova studied a 1-dimensional process with mutation and showed that the probability of mutant fixation and time to obtain two-hit mutants differ from the well-mixed setting \cite{Ko06s}; this work was generalized to higher dimensions in \cite{DuMo10s}.  Later, in \cite{ThLoStKo10,Ko13} the relationships between migration, mutation, selection and invasion in a spatial stochastic evolutionary model were explored.  Martens and colleagues studied the dynamics of mutation accumulation and population adaptation on a discrete time hexagonal lattice model with nearest neighbor replacement upon reproduction in one and two dimensions \cite{martens2011interfering, martens2011spatial}.  In a recent work Antal and colleagues consider a stochastic spatial model of cancer progression where cells acquire successive fitness advantages along the edge of the tumor. In the context of this model they study the shape of the evolving tumor front as well as the number of mutations acquired in the tumor \cite{AnKrNo}.  Lastly, in previous work we studied the accumulation and spread rates of advantageous mutant clones in a spatially structured population of general dimension \cite{DuFoLe12}.  In the current work we build upon these previous studies to develop a quantitative understanding of field cancerization and tumor recurrence risks.

The article is organized as follows: in section \ref{model} we introduce the stochastic mathematical model and describe basic properties regarding the survival and growth rate of mutant clones.  Using previously derived results on the spread of mutant clones, we introduce a mesoscopic approximation to the model.  In section \ref{field_char} we analyze the model to investigate the characteristics and extent of local and distant premalignant fields at the time of initiation.  In particular, we determine how the spatial geometry of the field (e.g.\ number and size of lesions) depends on cellular and tissue properties such as mutation rate, tissue renewal rate and mutational fitness advantages. In section \ref{recur_pred} we analyze the model to understand the risk of recurrence due to local or distant field malignancies, as a function of time and cellular parameters.

Throughout the paper we will use the following notation for the asymptotic behavior of positive functions, 
\begin{align*}
f(t) \sim g(t) & \quad\hbox{if $f(t)/g(t) \to 1$ as $t \to \infty$}, \\
f(t) = o(g(t)) \ \hbox{or} \ f(t)\ll g(t) & \quad\hbox{if $f(t)/g(t) \to 0$ as $t \to \infty$}, \\
f(t) \gg g(t) & \quad\hbox{if $f(t)/g(t) \to \infty$ as $t \to \infty$}, \\
f(t) = O(g(t)) & \quad \hbox{if $f(t) \leq C g(t)$ for all $t$}, \\
f(t) = \Theta(g(t)) & \quad \hbox{if $c g(t) \le |f(t)| \leq C g(t)$ for all $t$} .
\end{align*}
Finally, we use the notation $X=_d F$ to denote that the random variable $X$ has distribution $F$.

\section{Mathematical framework and basic properties}\label{model}
Cancer initiation is associated with the accumulation of multiple successive genetic or epigenetic alterations to a cell \cite{weinberg2013biology}. A subset of these genetic events may give rise to a fitness advantage (i.e.\ an increase in reproductive rate of the cell or avoidance of apoptotic signals), and subsequently lead to a clonal expansion within the tissue.  The expanding mutant cell populations form the background for further independent genetic events which lead to carcinogenesis.  
As a result of this spatial evolutionary process, by the time of cancer initiation or diagnosis the tissue field surrounding a tumor can be composed of genetically distinct premalignant lesions of various sizes and stages.


\subsection{Cell-based model}

To study the dynamics of this process, we consider a stochastic model which describes the accumulation and spread of a clone of cells with genetic alterations throughout a spatially structured tissue (e.g.\ stratified epithelium). Thus, we consider the model on a regular lattice $\mathbb{Z}^d\cap [-L/2,L/2]^d$, where $L>0$ and $d$ is the number of spatial dimensions of the tissue. Each location in the lattice is occupied by a single cell, and each cell reproduces at a rate according to its fitness with exponential waiting times.  Whenever a cell reproduces, its offspring replaces one of its $2d$ lattice neighbors at random, see Figure \ref{fig:growth}A.  The type of each cell corresponds to its fitness, which is related to the number of genetic hits a cell has accumulated in a multi-step genetic model of cancer initiation.  For example, type-0 cells have fitness normalized to 1 and are labeled as wild-type or normal (with no mutations).  Initially our entire lattice is occupied by type 0 cells. Type-0 cells acquire the first mutation at rate $u_1$ to become type-1 cells. The type-1 cell will have a relative fitness advantage to type-0 cells, given by $1+s_1$, for some constant $s_1 \geq 0$.  In general, type-$i$ cells have a fitness advantage of $1+s_{i}$ relative to type-$(i-1)$ cells, and they acquire the $(i+1)-$th mutation in the sequence at rate $u_{i+1}$ to become type-$(i+1)$ cells. The process is stopped when a cell develops $k$ mutations; we call this the time of cancer initiation. The number of mutation $k$ as well as the parameters $u_i, s_i$ for $i=1, \ldots, k$  depend on the specific cancer type. Although many (epi)genetic events are selectively disadvantageous (i.e.\ they confer a selective disadvantage $s_i<0$), the progeny of deleterious mutants die out quickly so here we restrict our attention to the case $s_i\geq0$.  Note that this process can be thought of as a spatial version of the Moran process, a spatially well-mixed population model that is commonly used to describe carcinogenesis (e.g.\ see \cite{Schwein, IwaMicKomNow, Wodarz2007, DuScSc09, FoLeMi2011}).  In addition, the spatial reproduction and death dynamics of this model (without mutation) correspond to the biased voter process which has been well-studied in physics and probability literature.  In fact, a similar voter model approach was previously used to model cellular dynamics in epithelial tissue and found to correlate well with experimental predictions of clone size distribution in the mouse epithelium \cite{Klein}.

 The total number of cells in the fixed-size population is $N\equiv L^d$; in most cancer initiation settings this number is quite large (at least $10^6$), while mutation rates are quite small (orders of magnitude smaller than 1). Therefore we will, unless stated otherwise, restrict our analysis to regimes where $L\gg1$ and $u_i\ll1$.  In Section \ref{mes_model}, we will briefly discuss  the specific conditions that we impose on the relationship between these parameters.  For mathematical simplicity, the lattice is equipped with periodic boundary conditions; however in most relevant biological situations the domain size (i.e.\ cell number) is sufficiently large so that boundary effects are negligible.

\begin{figure}[htbp] 
   \centering
   \includegraphics[width=.7\textwidth]{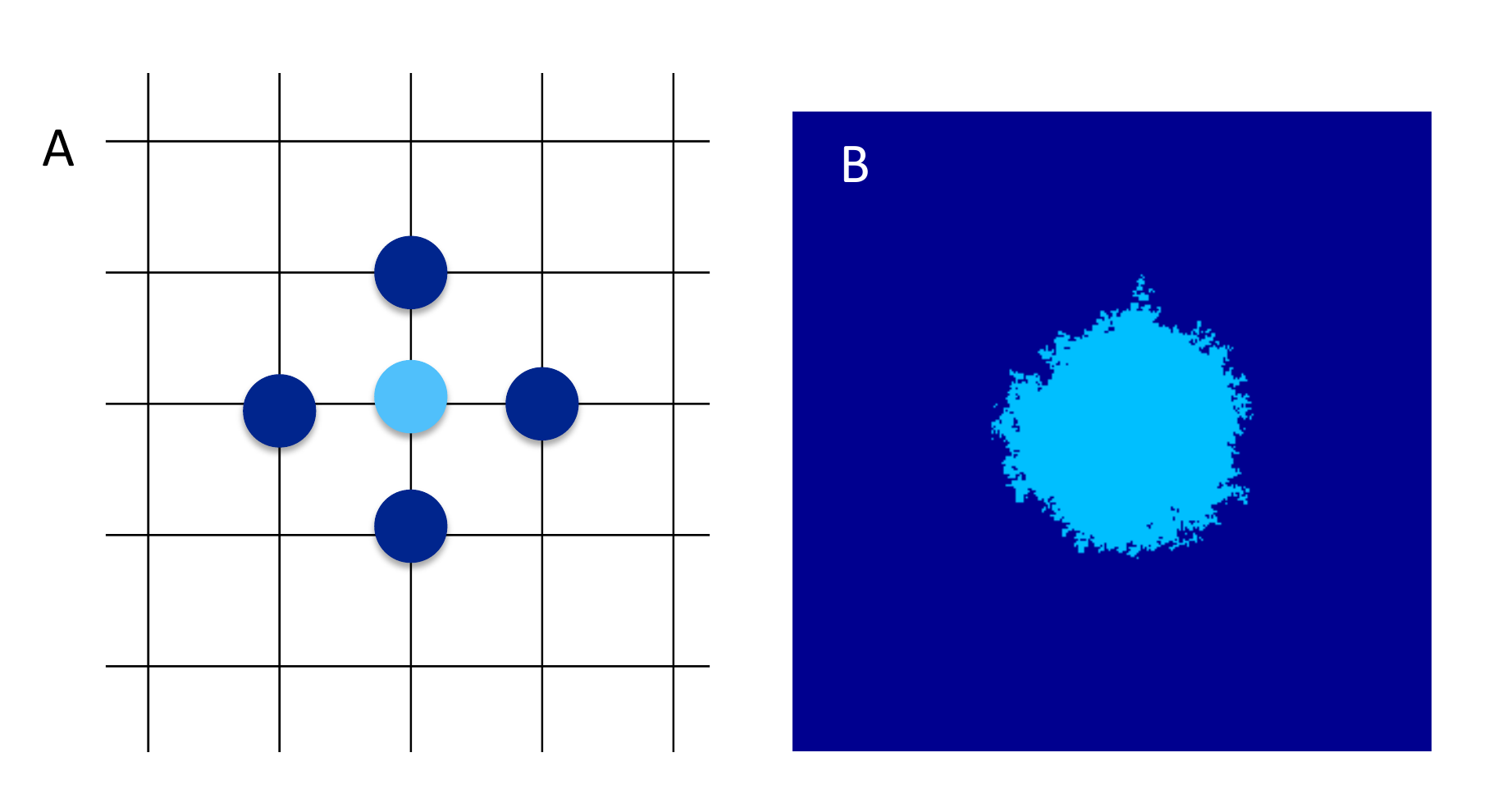} 
   \caption{{\bf Lattice dynamics.} (A) Schematic of spatial Moran model in $d=2$: each cell divides at rate according to its fitness and replaces one of its 2$d$ neighbors: if the light blue cell divides, its offspring replaces one of the dark blue neighbors, chosen uniformly at random. Every lattice site is occupied at all times (not shown). (B) Simulation example of the model: growth of an advantageous clone (light blue) starting from one cell with fitness advantage $s=0.2$ over the surrounding field (dark blue). }
     \label{fig:growth}
\end{figure}

{\it Note on dimension of the model.}  
We analyze the general model in space dimensions $d=1,2,3$. While all epithelial tissues have an intrinsically three dimensional architecture, in some situations considering $d=1,2$ may be a good approximation. For example, cancer initiation in mammary ducts of the breast, renal tubules of the kidney, and bronchi tubes of the lung could be viewed as approximately one-dimensional processes, due to the aspect ratio of tube radius versus length. On the other hand, cancer initiation in the squamous epithelium of the cervix, the bladder or the oral cavity can be viewed as two-dimensional process, since initiation occurs in the basal layer of the epithelium which is only 1-2 cells thick (e.g.\ see Figure \ref{fig:epi_cartoon}). The validity of such approximations poses an interesting problem in itself, but will not be addressed in this work. 
 \begin{figure}[htbp] 
   \centering
   \includegraphics[width=.8\textwidth]{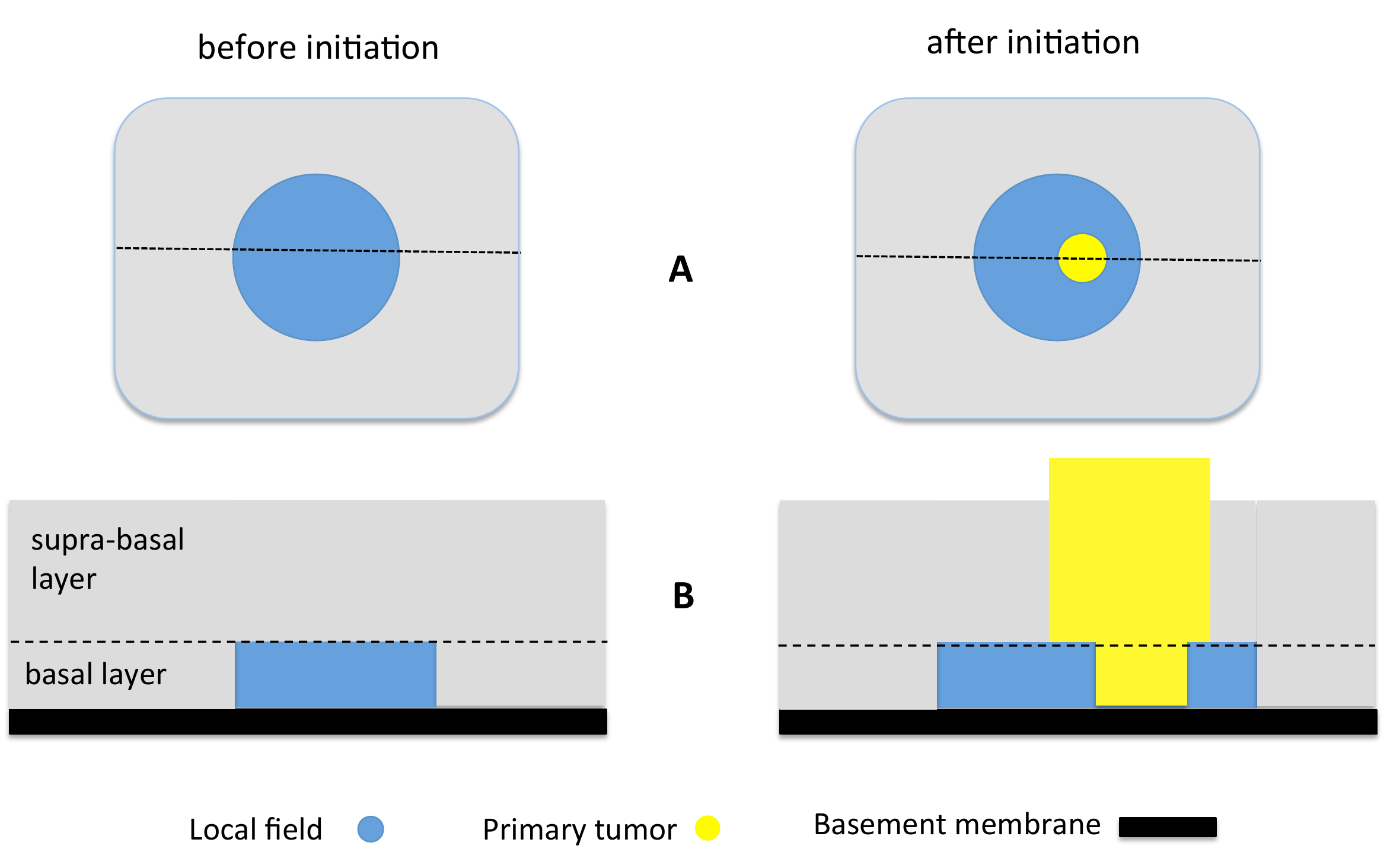} 
   \caption{{\bf Geometry of squamous epithelium.} {\bf A} Basal layer (vertical perspective) before initiation with local field (left), and after initiation where the tumor is growing within the local field (right). {\bf B}  Sideways view of the fields before and after initiation, along the dashed lines in panel A. The proliferative cells inhabiting the two-dimensional lattice in the model reside in the basal layer of the epithelium. After initiation, malignant growth is not restricted to basal layer only.}
   \label{fig:epi_cartoon}
\end{figure}


\subsection{Survival and growth of a single mutant clone}\label{surv}
We first establish some basic behaviors of mutant cells and their clonal progeny within a tissue.  Of particular interest are: (i) the survival probability of a mutant clone, and (ii) the rate of spatial expansion of the mutant clone through the tissue.  In particular, how are these characteristics influenced by tissue parameters and the cellular fitness advantage conferred by a mutation? We have addressed some of these questions in a previous work \cite{DuFoLe12} and restate the results here to make the paper self-contained.  In addition, we perform new simulations in this work to fill in gaps where theoretical results are currently not available.   

Consider the probability that a mutant cell survives to form a viable clone (i.e.\ does not die out due to demographic stochasticity).  Let type-1 cells have fitness $1+s$ and type-0 cells have fitness $1$, and let $\phi_t(x)$ denote the type of cell at site $x$ in the lattice at time $t$.  Define $$\xi_t \equiv \{ x \in \mathbb{Z}^d\cap [-L/2,L/2]^d: \phi_t(x) = 1 \}.$$  In other words, $\xi_t$ is the set of all type-1 cell locations at time $t$.  
We initiate the model with a single type-1 cell at the origin surrounded by type-0 cells in all other locations:
$$
\phi_0(x)=\begin{cases} 1,&\enskip x=0\\ 0,&\enskip\mbox{otherwise,}\end{cases}
$$ 
and assume no further mutations are possible ($u_i = 0$).
This simplified model is known as the Williams-Bjerknes model \cite{WilBje72}, and if   $L=\infty$ then it corresponds to the biased voter model, see e.g.\  \cite{liggett2005interacting}.
Let $|\xi_t|$ denote the number of type-1 cells in the model at time $t$.  Then we  can define the extinction time of the process $T_0\equiv\inf\{t>0: |\xi_t|=0\}$.
The probability of survival of a single mutant clone with selective advantage $s$ over the surrounding cells is then the probability of the event $\{T_0=\infty\}$. By looking at the the process $|\xi_t|$ only at its jump times, we note that the embedded process is a discrete time random walk that moves one up with probability $s/(1+s)$ or one down with probability $1/(1+s)$.  This can be seen by observing that the process only changes at boundaries between type-0 and type-1 cells, and the only possible resulting events are that the type-0 gets replaced by a type-1 (resulting in a jump up in $|\xi_t|$) or the type-1 gets replaced by a type-1 (resulting in a jump down in $|\xi_t|)$.  Analysis of the overall survival probability of this random walk can then be calculated using simple gambler's ruin analysis, 
$$
P(T_0=\infty)=\frac{s}{1+s}\approx s,
$$
where the approximation is valid for $s\ll1$.  Thus, the probability that a mutant clone with fitness advantage $s$ survives is $\frac{s}{1+s}$, and is independent of the dimension of the tissue.

It is important to understand how the expansion rate of a mutant clone depends on the selection strength $s$ of the mutant.  To this end, we first state a  result proven by Bramson and Griffeath \cite{BraGri80,BraGri81} which establishes an asymptotic shape for the type-1 clone.
\begin{theorem}[Bramson and Griffeath]\label{BGthm}
For any $\eps>0$, there is a $t_\eps(\omega)$ such that on $\{ T_0 = \infty\}$ 
$$
(1-\eps)tD \cap \Z^d \subset \xi_t \subset (1+\eps) tD \quad \hbox{for $t \ge t_\eps(\omega)$,}
$$
where $D$ is a convex set and has the same symmetries as $\Z^d$. 
\end{theorem}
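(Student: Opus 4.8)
The plan is to prove this shape theorem by the standard route for additive growth models: construct the process from a graphical representation, reduce the geometry of $\xi_t$ to the behaviour of first-passage times between lattice sites, and then invoke Kingman's subadditive ergodic theorem to extract a deterministic limiting norm whose unit ball is the set $D$. First I would build all the biased voter dynamics on a single probability space from a family of independent Poisson processes attached to the directed edges of $\Z^d$: to each ordered pair of neighbours $(x,y)$ assign a rate-$\tfrac{1}{2d}$ clock carrying a ``neutral'' arrow (at which $y$ copies the type of $x$) and a rate-$\tfrac{s}{2d}$ clock carrying a ``biased'' arrow (at which $y$ becomes type $1$ whenever $x$ is type $1$). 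This representation makes the process simultaneously \emph{monotone} (if $\xi_0\subset\zeta_0$ then $\xi_t\subset\zeta_t$) and \emph{additive} ($\xi_t^{A\cup B}=\xi_t^A\cup\xi_t^B$), the two structural properties that drive everything that follows.

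Next I would define, for $x\in\Z^d$, the passage time $\tau(x)=\inf\{t\ge 0:\ x\in\xi_t^{\{0\}}\}$ to reach $x$ from a single occupied site at the origin, together with its translates. Using translation invariance of the graphical representation and the additive/monotone structure, these satisfy a subadditivity relation of the form $\tau(x+y)\le \tau(x)+\tau^{(x)}(y)$, where $\tau^{(x)}$ is a translated copy of the passage time seen after $x$ is first occupied. Kingman's subadditive ergodic theorem then yields, for each direction, a constant $\mu(x)$ with $\tau(nx)/n\to\mu(x)$ almost surely and in $L^1$; one checks that $\mu$ is finite and bounded away from $0$ (comparison of the spread with a dominating supercritical branching random walk bounds the speed both above and below), positively homogeneous, and subadditive, hence, using the reflection symmetries of the dynamics, a norm. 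One then sets $D=\{x\in\R^d:\mu(x)\le 1\}$. Convexity of $D$ is immediate from subadditivity of $\mu$, and the symmetries of $D$ follow from invariance of the dynamics under the symmetries of $\Z^d$. Ray-by-ray convergence, combined with a covering argument over a finite $\eps$-net of directions and monotonicity in $t$, upgrades to the \emph{outer} inclusion $\xi_t\subset(1+\eps)tD$ for all large $t$.

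The genuinely delicate part is the \emph{inner} inclusion $(1-\eps)tD\cap\Z^d\subset\xi_t$ on $\{T_0=\infty\}$, and I expect this to be the main obstacle. Two issues arise. First, unlike a pure growth (Eden/Richardson) model, the biased voter model permits reversions $1\to 0$, so even inside the occupied droplet a site may temporarily be lost; one must show that on $\{T_0=\infty\}$ such holes heal, i.e.\ that any site well inside the linearly growing region is occupied at the reference time $t$ with overwhelming probability. This is where the self-duality of the biased voter model is used: the dual is a branching--coalescing random walk, and controlling the probability that the dual started at an interior site survives back to an occupied source yields a quantitative interior-filling estimate. Second, conditioning on non-extinction introduces long-range dependence; I would circumvent this by the usual device of waiting until survival is effectively ``decided'' and then restarting an unconditioned process from a large occupied block, so that the subadditive limit applies on the event of interest. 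Combining the interior-filling estimate with the passage-time asymptotics and a Borel--Cantelli argument along a geometric subsequence of times then gives the inner inclusion for all $t\ge t_\eps(\omega)$, completing the proof.
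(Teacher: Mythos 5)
The paper does not actually prove this statement: Theorem \ref{BGthm} is imported as an external result of Bramson and Griffeath \cite{BraGri81,BraGri80}, so there is no in-paper proof to compare against. Your sketch follows the architecture of the genuine Bramson--Griffeath argument (and of the later Durrett--Griffeath shape theorem for the contact process): additive graphical representation, subadditive passage times, duality with branching coalescing random walks to heal holes, and a restart device to deal with the conditioning on $\{T_0=\infty\}$. As a roadmap it names the right ingredients and correctly locates the inner inclusion as the hard part.

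There are nonetheless two genuine gaps. First, the parenthetical claim that ``comparison of the spread with a dominating supercritical branching random walk bounds the speed both above and below'' is wrong in one direction: domination by a branching random walk shows only that the process spreads \emph{at most} linearly (equivalently $\mu(x)>0$); it gives no lower bound on how fast the surviving biased voter model actually advances, i.e.\ no proof that $\mu(x)<\infty$ or that the droplet has strictly positive asymptotic propagation speed. Establishing that positive speed for the conditioned process is precisely the core of Bramson and Griffeath's two papers (an explicit edge-process random-walk computation in $d=1$, a block/comparison construction in $d\ge 2$), and your plan supplies no mechanism for it. Second, Kingman's theorem cannot be applied as stated, since $\tau(nx)=\infty$ with positive probability (the process may die out), so the passage times are neither a.s.\ finite nor integrable; moreover in the relation $\tau(x+y)\le\tau(x)+\tau^{(x)}(y)$ the restarted copy launched from the single site $x$ may itself go extinct even when the original process survives. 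Your proposed fix --- restarting an unconditioned process from a large occupied block once survival is ``decided'' --- is the right idea, but it is exactly where the work lies: one must show that on $\{T_0=\infty\}$ the process a.s.\ reaches such a block in time $o(t)$ and that the coupling error does not disturb the subadditive limits. Until these two points are supplied, the argument as written yields only the outer inclusion $\xi_t\subset(1+\eps)tD$.
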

In other words, the Bramson-Griffeath shape theorem says that conditional on the clone never going extinct,  the radius of the clone expands linearly.  In previous work, we studied how this linear rate of expansion depends on the selection strength $s$ in the setting of weak selection strength, see Theorem 1 of \cite{DuFoLe12}. More precisely, if we denote by $e_1$  the first unit vector in $\R^d$ and define the growth rate $c_d(s)$ such that $$D\cap\{ze_1:z\in\mathbb{R}\}=[-c_d(s),c_d(s)],$$ then as $s\to 0$,
\begin{align}\label{s_asymp}
c_d(s) \sim \begin{cases} s & d=1 \\ \sqrt{4\pi s/\log(1/s)} & d=2 \\ \sqrt{4\beta_3s} & d = 3, \end{cases}
\end{align}
where $\beta_3$ is the probability that two simple random walks started at 0 and $e_1=(1,0, 0)$ never hit.
In other words, the radius of the asymptotic shape $D$ approximating the type-1 clone grows linearly with rate on the order of $c_d(s)$.   Note that as the dimension $d$ increases, the growth rate increases due to a larger clone boundary size in higher dimensions.

The previous results hold only in the regime of weak selection or small $s$. For larger values of the selective advantage $s$, simulations can be used to obtain $c_d(s)$ for $d=2,3$  (in $d=1$ the process can be analyzed directly through simple random walk analysis and we obtain that $c_1(s)=s$). For example, Figure \ref{fig:small_s} shows that the $s$-dependence of the growth rate is approximately linear for $s> 0.5$; in this case simple regression yields the estimate $c_2(s)\approx 0.6 s + 0.22$ ($s>0.5$).  Thus, a combination of analysis and simulation gives us a complete picture of how spatial expansion rate of mutant clones in a tissue depend upon the selective advantage $s$ for a wide range of selection strengths.
\begin{figure}[htb!]
   \centering
   \includegraphics[width=0.5\textwidth]{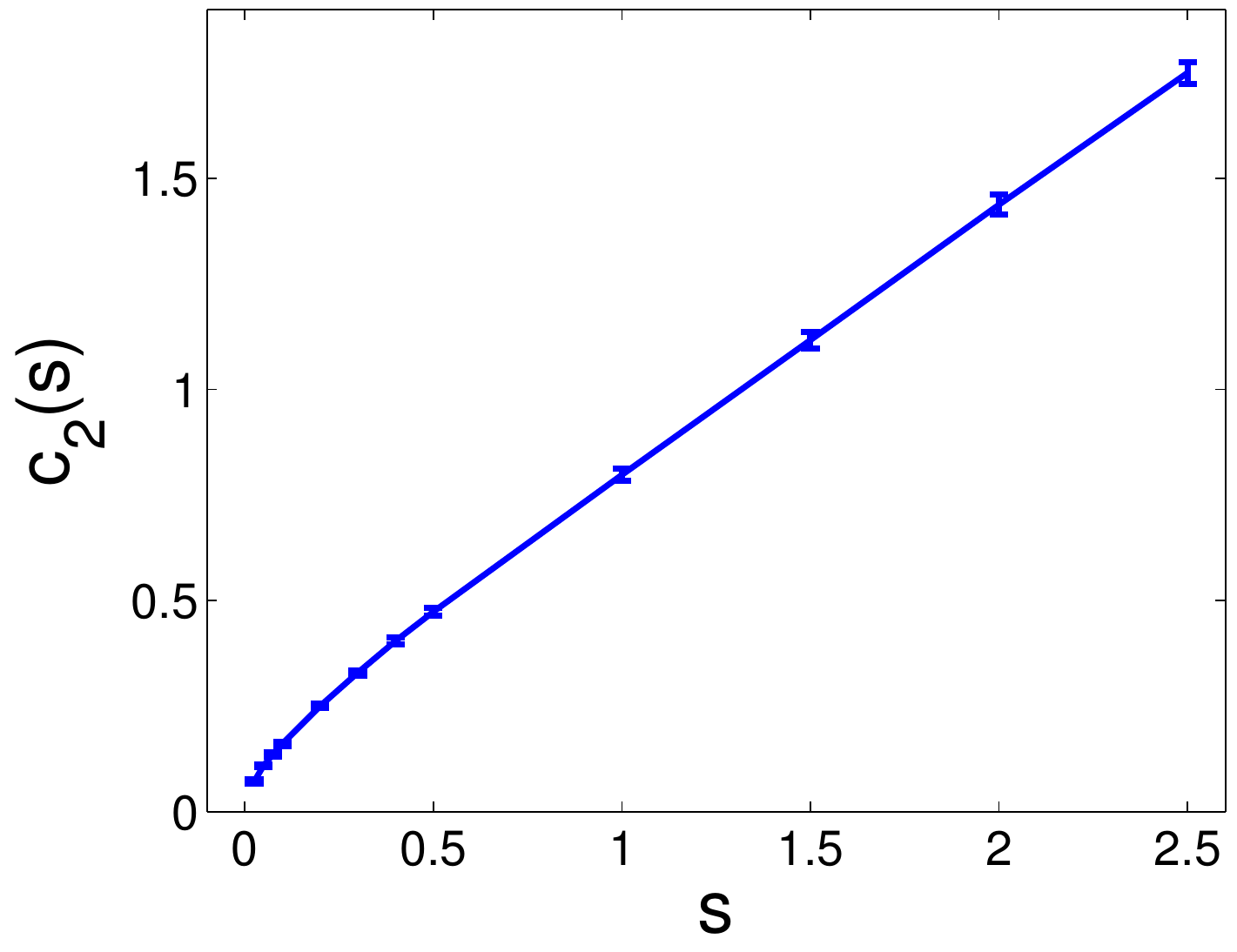} 
   \caption{{\bf Simulations of clonal expansion rate for large $s$.} Dependence of the growth rate $c_2$ on the fitness advantage $s$. Statistics performed on  $M=100$ samples for each $s$-value. 
   }  
   \label{fig:small_s}
\end{figure}

\subsection{Approximating with a hybrid mesoscopic model}\label{mes_model}

Our results regarding the survival and growth of a single mutant clone suggest a hybrid mesoscopic model simplification that enables our analysis of the field cancerization process.  In particular, each successful mutant clone can be well-approximated as growing $d$-dimensional ball with expansion rate $c_d(s)$ as calculated in the previous section.  Before proceeding however, let us clarify the notion of clone `survival' a.k.a. `success' in the full model, where multiple mutations can arise and compete in the same finite domain.  In particular, we consider a mutant clone with selective advantage $s$ over the background to be successful if it reaches size $\gg 1/s$. This criterion guarantees a negligible chance of extinction in an infinite domain with no interference.  In particular, if we define $T_0$ to be the extinction time of a biased voter model with selective advantage $s$, one can use the embedded discrete time process and gambler's ruin calculations to show that if the biased voter process reaches $k\gg 1/s$ then $P\left(T_0=\infty\big | |\xi_0|=k\right)\approx1-e^{-ks}.$

Consider the fate of an unsuccessful type-1 clone arising on a background of type-0 cells. The clone evolves as a supercritical ($s > 1$) biased voter model conditioned on extinction. In \cite{DuFoLe12} we showed that unsuccessful type-1 mutations typically die out by a time of order
\begin{align}
\ell(s)= \begin{cases} s^{-2} & d=1,\\
s^{-1} \log \ob{1/s} & d=2, \\
s^{-1} & d=3.
 \end{cases}
\end{align}
As seen in the previous section, the survival probability in the biased voter model (starting with a single type 1 cell in a sea of type 0 cells) is $s/(1+s)$, but in the more complex spatial Moran model with the possibility of multiple interacting type 1 clones, it is not immediately clear that this survival probability is still given by $s/(1+s)$. However, it was shown in \cite{DuFoLe12} that the above survival probability remains a good approximation as long as 
\begin{align}
(A0) \qquad (1/u_1) \gg \ell(s)^{(d+2)/2}.
\label{A0}
\end{align}
 If the total number of type-1 cells is always a negligible fraction of $N$ and (A0) holds, then successful type-1 mutations arrive as a Poisson arrival process with approximate rate $Nu_1 \frac{s}{s+1}$, where $N$ is the total number of cells in the tissue. In particular, these conditions hold for all the numerical examples in this article. 

We are now ready to introduce a hybrid mesoscopic model approximation as follows: Type-1 mutations arrive in the healthy tissue as a Poisson arrival process with rate $Nu_1$, distributed uniformly at random in the spatial domain.  Each mutation event has two potential outcomes:
\begin{itemize}
\item with probability $s/(1+s)$, the mutation is successful and we approximate the subsequent clonal expansion with a ball whose radius grows deterministically.  The macroscopic growth rate is $c_d(s)$, which was derived from individual cellular growth kinetics as described in section \ref{surv}.  As a representative simulation in figure \ref{fig:growth}B suggests, the ball in standard $L^2$-norm in $\R^d$ will be utilized.  
\item with probability $1/(1+s)$, the mutation is unsuccessful, and the clone evolves according to the full stochastic (cellular-level) model dynamics conditioned on extinction.  
\end{itemize}
Note that the remainder of the paper discusses properties of this mesoscopic model.

It will be useful to define $\gamma_d$ as the volume of a ball of radius 1 in $d$ dimensions, 
$$
\gamma_1=2, \qquad \gamma_2=\pi, \qquad \gamma_3 = 4\pi/3.
$$
Note that although the stochastic fluctuations of the shape of expanding clones are lost in this approximation, one gains generality since the mesoscopic model can approximate a whole class of microscopic models that admit a shape result in the line of Theorem \ref{BGthm}.   

\subsection{Cancer initiation behavior}

Although the methodology developed in this work can be generalized to the setting of $k$-mutation carcinogenesis models, we will consider for simplicity the classic two-mutation model of cancer initiation first introduced by Knudson \cite{Knudson2001}. Here, type-0 cells are wild-type with fitness $1$, type-1 cells are premalignant with fitness $1+s_1$ relative to type-0 cells, and type-2 cells are initiated cancer cells with fitness $1+s_2$ relative to type-1 cells.  The time of cancer initiation $\sigma_2$ is defined as the time at which the first successful type-2 cell arrives.  In \cite{DuFoLe12}, we studied the situation where $s_1 = s_2 = s > 0$ and found that the timing of cancer initiation is strongly governed by the limiting value of the following meta-parameter:
$$
\Gamma \equiv (Nu_1 s)^{d+1} (c_d^d u_2s)^{-1}.
$$
Roughly speaking, $\Gamma^{1/(d+1)}$ represents the ratio of the rate of producing successful type-1 cells to the subsequent time it take to acquire the first successful type-2.
We found that both the mechanisms and distribution of the cancer initiation time vary significantly depending on the regime of $\Gamma$:
\begin{itemize}
\item Regime 1 (R1): When $\Gamma <1$, the first successful type-2 mutation occurs within the expanding clone of the first successful type-1 mutation (left panel of Figure \ref{fig:regimes}). The initiation time $\sigma_2$ is exponential and does not depend on the spatial dimension.
\item Regime 2: (R2) For $\Gamma \in (10, 100)$, the first successful type-2 mutation occurs within one of several successful type-1 clones (middle panel in Figure \ref{fig:regimes}). The initiation time is no longer exponential and depends explicitly upon the spatial dimension.
\item Regime 3 (R3): When $\Gamma >1000$, the first successful type-2 mutation occurs after many successful type-1 mutations have occurred (right panel of Figure \ref{fig:regimes}). The first successful type-2 can arise from {\it either} a successful {\it or} an unsuccessful type-1 family; the initiation time represents a mixture distribution of these two events.  
\item Note that for $\Gamma\in [1,10]$ and $\Gamma\in [100,1000]$ we say that we are in borderline regimes R1/R2 and R2/R3 respectively.
\end{itemize}

We refer the reader to  \cite{DuFoLe12} for mathematical details of these statements.  Note that these `regimes' can be thought of as labels highlighting distinct types of initiation behaviors that arise as $\Gamma$ changes. In fact the system behavior continuously varies through the parameter space, and borderline cases between these regimes do exist. Figure \ref{fig:wait_time} shows how the distribution of the waiting time $\sigma_2$ varies with changing number of cells $N$ in $d=2$.  We note that as $N$ increases, the waiting time distribution shifts to the left and initiation occurs earlier.   By comparing Figures \ref{fig:regimes} and  \ref{fig:wait_time} we see that early initiation times are associated with a diffuse premalignant field with a large  number of independent lesions, whereas late initiation times are associated with a single premalignant field harboring the initiating tumor cell.
\begin{figure}[htbp] 
   \centering
  \includegraphics[width=1\textwidth]{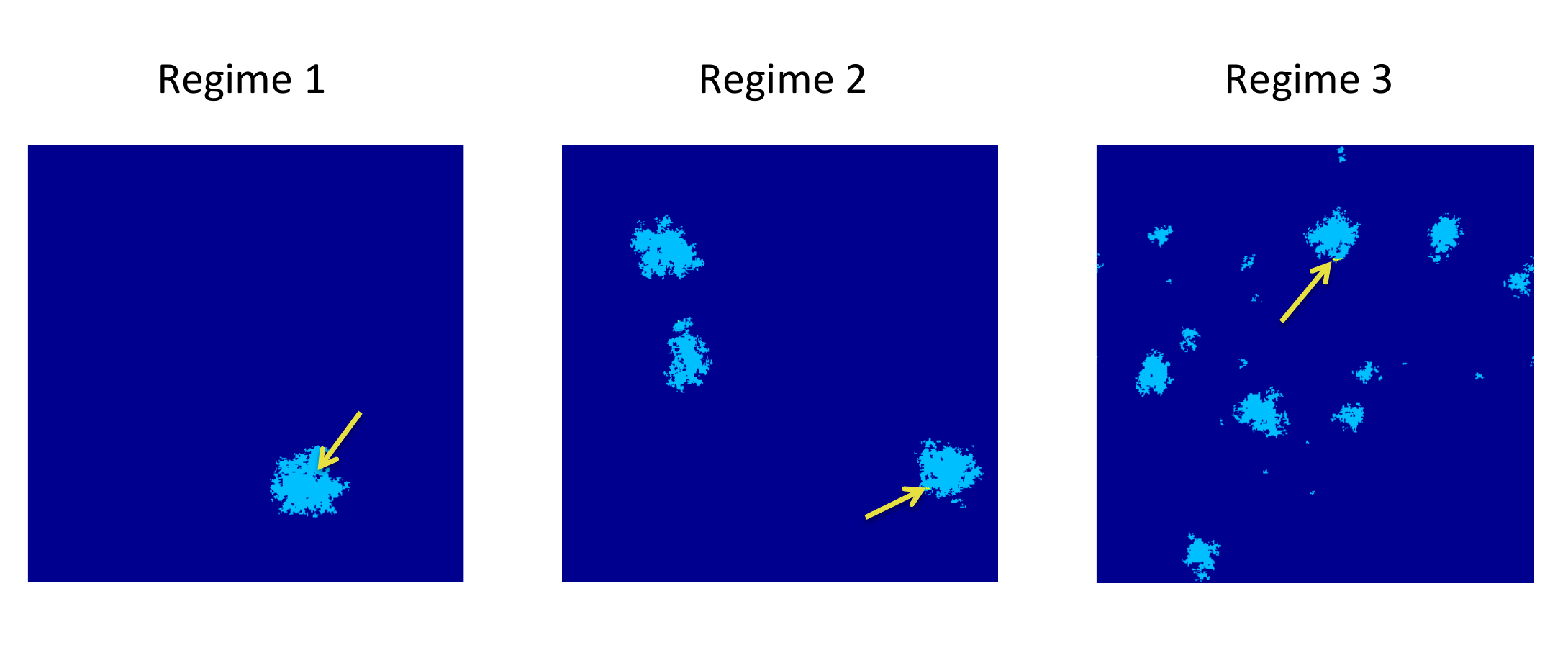} 
   \caption{ {\bf The three dynamic regimes.} {\it Regime 1:} first successful type-2 cell (arrow) arises in the first premalignant clone, $\Gamma=0.055$. {\it Regime 2:} several premalignant clones are present at the time of the first successful type-2 cell, $\Gamma = 54.47$. {\it Regime 3:} a large number of small premalignant clones  are present by the time of the first successful type-2 cell, $\Gamma=5.45\times 10^4$. Simulations obtained with parameter values as in Figure \ref{fig:wait_time}.}
   \label{fig:regimes}
\end{figure}

\begin{figure}[htbp] 
   \centering
   \includegraphics[width=.5\textwidth]{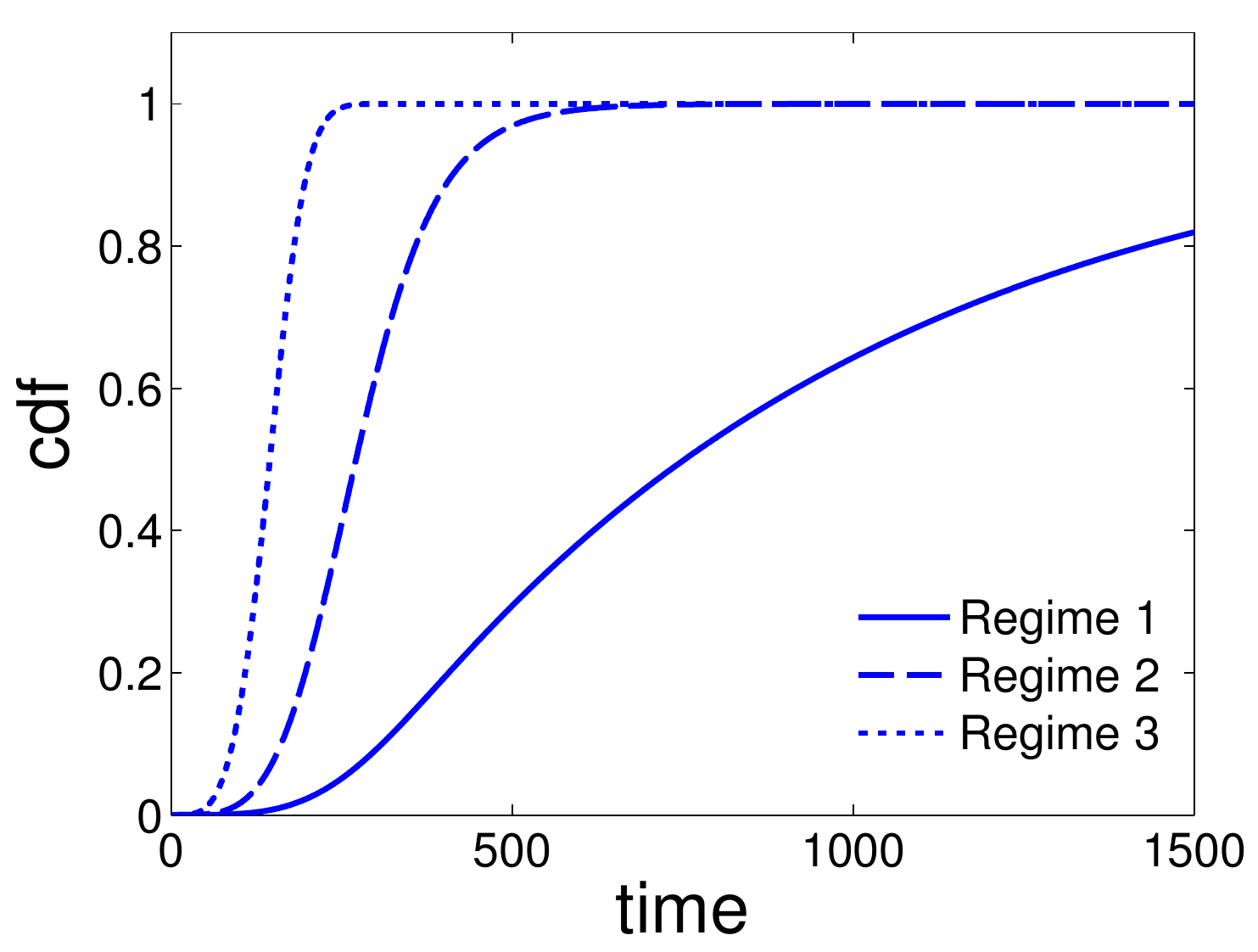} 
   \caption{{\bf Waiting  time until first successful type-2.} Cumulative distribution function (cdf) of $\sigma_2$, the waiting time until the first successful type-2 mutation, for increasing $N$ (see Theorem \ref{pdfsig2}). Regime 1: $u_1=7.5\cdot 10^{-8}$, Regime 2: $u_1=7.5\cdot 10^{-7}$, Regime 3: $u_1=7.5\cdot 10^{-6}$.  All other parameters are fixed: $d=2$, $N=2\cdot 10^5$, $s_1=s_2=0.1$, $u_2 = 2\cdot 10^{-5}$, $c_2(s_1) = 0.16$.}
   \label{fig:wait_time}
\end{figure}

To briefly summarize this section, we have described first a microscopic model of cellular division, mutation and death within a regularly structured epithelial tissue.  Analysis of the fine-scale dynamics of this model leads to a more tractable hybrid mesoscopic model which approximates the microscopic model.  In a previous work, we studied the time until cancer initiation ($\sigma_2$) under this hybrid model and we found three distinct regimes of initiation behavior in which the distribution of $\sigma_2$ has distinct parametric forms.  In the next section, we analyze this mesoscopic model to study the characteristics and extent of premalignant fields at the stochastic time of cancer initiation or diagnosis.  In the analyses throughout, we will consider parameter ranges spanning all three regimes of initiation behavior; however, for simplicity in regime 3 we will restrict ourselves to the range of parameter space in which successful type-2 mutations arise from successful type-1 mutations (i.e.\ that do not later die out).  The behavior in the final remaining portion of the parameter space in regime 3 will be the subject of further work.

\section{Characterizing the premalignant field}
\label{field_char}

The time between cancer initiation and diagnosis, referred to here as $T_D$, is a subject of great interest, see e.g.\ \cite{attolini2009evolutionary} for a review. In general, $T_D$ is itself a random variable and may depend on the natural history of the disease until initiation. However, if we assume that $T_D$ is independent of $\sigma_2$, then we can characterize the premalignant field at time of diagnosis, $\sigma_2+T_D$, by means of the field characterization at time $\sigma_2$, together with the distribution of the delay time $T_D$. For this reason, even though the clinically relevant time is $\sigma_2 + T_D$, we focus here on characterizing the field at $\sigma_2$. Note that mathematically, this requires us to condition our analyses upon observing $\sigma_2$ at some time $t$, i.e.\ condition upon the event $\{\sigma_2 =t \}$.

The starting time of the model ($t=0$) is assumed to be at the end of tissue development and the start of the tissue renewal phase. However for some tissues it is difficult to estimate this time, and thus 
it may be difficult to ascertain the system time $t$ at the time $\sigma_2$.  In such cases, it is simple to adapt our analyses to this scenario and treat $\sigma_2$ as an unknown quantity, by  removing the conditioning on $\{\sigma_2 =t\}$ and integrating of our results against the density of $\sigma_2$. The following theorem provides this density in terms of
\begin{align}\label{lambda}
\lambda\equiv N u_1 \frac{s}{1+s},
\end{align}
which is the arrival rate of successful type-1 mutations, and
\begin{align}\label{phi}
\phi(t) \equiv \frac{1}{t}\int_0^t \exp\ob{-\theta r^{d+1}} dr,
\end{align}
where
\begin{align}\label{theta}
\theta \equiv \frac{u_2 \bar{s}_2 \gamma_d c_d^d(s_1)}{d+1},
\end{align}
and $\bar s_2=s_2/(1+s_2)$.

\begin{theorem}\label{pdfsig2}
 The waiting time $\sigma_2$ until birth of the first successful type-2 mutant, has   probability density function
\begin{align*}
\lambda e^{t\lambda(\phi(t)-1)}  \ob{1- e^{{-\theta  t^{d+1}}}}.
\end{align*}
\end{theorem}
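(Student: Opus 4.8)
The plan is to compute the survival function $\P(\sigma_2>t)$, which equals the probability that no successful type-2 mutation has appeared by time $t$, and then read off the stated density by differentiating $1-\P(\sigma_2>t)$. First I would record the geometric input supplied by the mesoscopic model: successful type-1 clones are born according to a Poisson process on $[0,\infty)$ of rate $\lambda$, with birth times $\{\tau_i\}$, and a clone born at time $\tau$ occupies at any later time $r$ a ball of radius $c_d(s_1)(r-\tau)$, hence containing $\gamma_d c_d^d(s_1)(r-\tau)^d$ type-1 cells.

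Next I would compute, for a single clone, the chance it has spawned no successful type-2 cell. Each type-1 cell mutates to type-2 at rate $u_2$ and each such mutation is successful with probability $\bar s_2$, so by Poisson thinning the successful type-2 mutations emanating from the clone born at $\tau$ form an inhomogeneous Poisson process of intensity $u_2\bar s_2\gamma_d c_d^d(s_1)(r-\tau)^d$ at time $r$. Integrating this intensity from $\tau$ to $t$ and using the definition of $\theta$ shows the expected number produced by time $t$ equals $\theta(t-\tau)^{d+1}$, so conditional on the birth times the clone produces none with probability $\exp(-\theta(t-\tau)^{d+1})$. Since the type-2 offspring of distinct clones are conditionally independent, I would write
$$\cP{\sigma_2>t}{\{\tau_i\}} = \prod_{i:\,\tau_i\le t}\exp\ob{-\theta(t-\tau_i)^{d+1}},$$
take expectations over the rate-$\lambda$ Poisson process via the Laplace functional $\E\exp\ob{-\sum_i h(\tau_i)}=\exp\ob{-\lambda\int_0^t\ob{1-e^{-h(\tau)}}\,d\tau}$ with $h(\tau)=\theta(t-\tau)^{d+1}$, and substitute $r=t-\tau$ to obtain
$$\P(\sigma_2>t)=\exp\ob{-\lambda\int_0^t\ob{1-e^{-\theta r^{d+1}}}\,dr}=\exp\ob{\lambda t(\phi(t)-1)},$$
the last step being the definition of $\phi$.

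Finally I would differentiate. Setting $\psi(t)=\lambda t(\phi(t)-1)=\lambda\ob{\int_0^t e^{-\theta r^{d+1}}\,dr-t}$ gives $\psi'(t)=\lambda\ob{e^{-\theta t^{d+1}}-1}$, so the density of $\sigma_2$ is $-\tfrac{d}{dt}\P(\sigma_2>t)=-\psi'(t)e^{\psi(t)}=\lambda\ob{1-e^{-\theta t^{d+1}}}e^{\lambda t(\phi(t)-1)}$, as claimed. I expect the main obstacle to be not the calculus but justifying the probabilistic scaffolding: that successful type-1 clones really do arrive as a rate-$\lambda$ Poisson process (the content of the mesoscopic approximation under assumption (A0)), and that conditional on their birth times the type-2 productions are independent inhomogeneous Poisson processes with the stated deterministic intensities. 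The latter requires neglecting spatial overlap of clones and the slight depletion of type-1 volume as type-2 cells appear; within the mesoscopic model these simplifications hold by construction, so the computation is exact there, and the interchange of expectation and product in the Laplace functional is the only remaining point to check carefully.
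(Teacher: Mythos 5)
Your proposal is correct and follows essentially the same route as the paper: the paper computes $P(\sigma_2>t)$ by conditioning on the number and arrival times of successful type-1 clones, using $P(\sigma_2>t\mid\mathcal{E}_t)=\exp(-u_2\bar s_2 V_{\mathcal{E}_t})$ with $u_2\bar s_2 V_{\mathcal{E}_t}=\sum_i\theta(t-t_i)^{d+1}$, and summing the resulting series to get $e^{t\lambda(\phi(t)-1)}$ before differentiating --- which is exactly your Laplace-functional computation written out term by term. Your packaging via the Poisson Laplace functional is a clean shortcut for the same argument, and your closing remarks about the mesoscopic approximation correctly identify where the modeling assumptions (rather than the calculus) carry the weight.
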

\begin{proof}
See (\ref{dens_sig2}) in section \ref{app_loc_field}.
\end{proof}

\subsection{Size of the local field at initation}\label{loc_field_sec}
We are first interested in characterizing the size of the local field, i.e.\ the region of the premalignant type-1 clone that gives rise to the first successful type-2 clone (see Figure \ref{SFPT}).  Following the nomenclature of \cite{braakhuis2002second}, we note the distinction between two different types of recurrent tumors: if the recurrence arises from a transformed cell in the premalignant field that gave rise to the primary tumor, the recurrence is called a {\it second field tumor}, see Figure \ref{SFPT}A. On the other hand, if the recurrence arises from a premalignant field that is clonally unrelated to the primary malignancy, it is called a {\it second primary tumor}, see Figure \ref{SFPT}B.  
These two types of recurrent tumors vary in terms of their degree of clonal relatedness to the primary tumor, and this may have some implications for treatment strategies in primary vs. recurrent tumors.

\begin{figure}[htbp]
   \centering
   \includegraphics[width=1\textwidth]{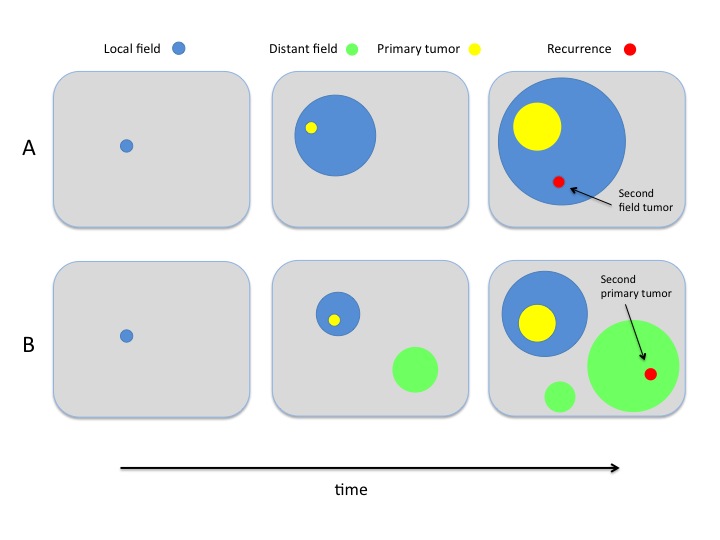} 
   \caption{{\bf Local and distant recurrences.} Local (blue) and distant (green) premalignant fields give rise to second field tumors and  second primary tumors (both red), respectively. In scenario {\bf A}, there is only one premalignant field (the local field) present at time of cancer initiation (middle panel), and the recurrence occurs inside the local field. In scenario {\bf B}, two unrelated precancerous fields are present at time of initiation (middle panel), and the recurrence may occur as a second primary tumor in the distant field.}
   \label{SFPT}
\end{figure}

We define now $R_l(t)$ to be the radius of the local field at time $t$, and $X_l(t)$ its corresponding area ($X_l=\gamma_d R_l^d$). Note that we will use the terminology `area' do describe clone sizes in all dimensions, and reserve the use of the term `volume' for space-time quantities. In the following, we are interested in determining the distributions of these two quantities at time $\sigma_2$, conditioned on the event $\{\sigma_2=t\}$. In other words, we are looking for the distributions of $\ob{R_l(\sigma_2)| \sigma_2 =t}$ and $\ob{X_l(\sigma_2)| \sigma_2 =t}$, respectively.  

At any given time, each clone produces initiating mutations at a rate proportional to its area. Hence the probability that clone $i$ (born at time $T_i$) gives rise to the initiating mutation at time $t$ is given by the ratio of clone $i$'s own area, $$X_i(t) \equiv \gamma_dc_d^d(s_1)(t-T_i)^d,$$ divided by the total area of type-1 clones present. In other words, the size distribution of the initiating clone is given by the distribution of a {\it size-biased pick} from the different clones present at the time the initiated mutation arises.  
\begin{definition}[Size-biased pick]\label{sbp}
Let  $L_1, \ldots, L_n$  be a family of $n$ random variables. A size-biased pick from $L_1, \ldots, L_n$ is defined as a random variable $L_{[1]}$ with conditional probability distribution
 $$
P(L_{[1]} =L_i |L_1,\ldots,L_n)= L_i/\sum_{j=1}^nL_j.
$$
\end{definition}
The following theorem is the main result of this section and characterizes the size-distribution of the local field at the time of initiation.  This is recognized as a size-biased pick from the clones present at time $t$, conditioned on the event $\{\sigma_2=t\}$.
  \begin{theorem}\label{sbp_thm}
 The distribution of the area of the local field at time $\sigma_2$, conditioned on $\{\sigma_2  \in  dt\}$, is  given by
 \begin{align}\label{dist1}
\hat{P}\ob{X_l(\sigma_2) \in dx}=  \hat{P}\ob{X_{[1]}\in dx} =  \frac{u_2 \bar{s}_2x^{1/d} }{d \gamma_d^{1/d} c_d(s_1)  (1-e^{-\theta t^{d+1}})} \exp\left[\frac{-u_2\bar{s}_2 x^{\frac{d+1}{d}}}{(d+1)\gamma_d^{1/d} c_d(s_1) } \right],
 \end{align}
 for $x \in [0, \gamma_d c_d^d(s_1) t^d]$. 
   \end{theorem}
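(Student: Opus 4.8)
The plan is to exploit the marked-Poisson structure of the mesoscopic model: compute the joint ``density'' of the event that the first successful type-2 mutation appears at time $t$ and arises from a type-1 clone of a prescribed age, and then normalize by conditioning on $\{\sigma_2\in dt\}$. Write $c_d=c_d(s_1)$ and recall that successful type-1 clones are born as a rate-$\lambda$ Poisson process on $[0,\infty)$, that a clone born at $\tau$ has area $\gamma_d c_d^d (r-\tau)^d$ at time $r\ge\tau$, and that each clone produces successful type-2 mutations at rate $u_2\bar{s}_2$ times its current area. Conditioned on the clone births $\{T_i\}$, the type-2 events therefore form an inhomogeneous Poisson process with intensity $u_2\bar{s}_2\sum_i\gamma_d c_d^d (r-T_i)^d\mathbf{1}_{\{T_i\le r\}}$ at time $r$, and $\sigma_2$ is its first point. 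Since a clone of age $a$ at time $t$ has area $X=\gamma_d c_d^d a^d$, the claimed law follows once I obtain the conditional law of the age $a$ of the clone carrying the initiating mutation.

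First I would compute, via the Mecke/Slivnyak formula, the intensity of the event that the \emph{first} type-2 is born in $[t,t+dt)$ from a clone of age in $[a,a+da)$. Conditioning on adding a clone birth at time $t-a$ contributes a factor $\lambda$; that clone fires a type-2 at time $t$ at rate $u_2\bar{s}_2\gamma_d c_d^d a^d=(d+1)\theta a^d$; the probability it fires no type-2 during its own lifetime $[t-a,t)$ is $\exp(-\theta a^{d+1})$, by integrating its rate and using the definition of $\theta$; and by Slivnyak's theorem the remaining clones still form a rate-$\lambda$ Poisson process, which must produce no type-2 on $[0,t)$. The probability of this last event is $\E\exp(-\theta\sum_i (t-T_i)^{d+1})$, which by the Poisson Laplace functional equals $\exp(-\lambda\int_0^t(1-e^{-\theta r^{d+1}})\,dr)=e^{t\lambda(\phi(t)-1)}$ after the substitution $r=t-\tau$ and the definition of $\phi$. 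Multiplying these factors gives the joint density
$$
\lambda\,(d+1)\theta\,a^d e^{-\theta a^{d+1}}\,e^{t\lambda(\phi(t)-1)}.
$$

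Next I would normalize. Integrating over $a\in[0,t]$ collapses the age factor, $\int_0^t(d+1)\theta a^d e^{-\theta a^{d+1}}\,da=1-e^{-\theta t^{d+1}}$, recovering exactly the $\sigma_2$-density of Theorem~\ref{pdfsig2} --- a convenient consistency check. Dividing the joint density by this marginal, the common factor $\lambda e^{t\lambda(\phi(t)-1)}$ cancels and leaves the conditional age density $\hat{p}(a)=(d+1)\theta a^d e^{-\theta a^{d+1}}/(1-e^{-\theta t^{d+1}})$ on $[0,t]$. Finally I would push this forward through $x=\gamma_d c_d^d a^d$: using $a=(x/\gamma_d c_d^d)^{1/d}$ together with $(d+1)\theta=u_2\bar{s}_2\gamma_d c_d^d$, the algebra reduces the prefactor to $u_2\bar{s}_2 x^{1/d}/(d\gamma_d^{1/d}c_d)$ and the exponent to $-u_2\bar{s}_2 x^{(d+1)/d}/((d+1)\gamma_d^{1/d}c_d)$, with $x$ ranging over $[0,\gamma_d c_d^d t^d]$, which is the stated formula.

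The main obstacle is the factorization, i.e.\ justifying the Mecke/Slivnyak computation cleanly. Two points need care: the ``first'' requirement forces the survival factor to include both the tagged clone (giving $e^{-\theta a^{d+1}}$) and all other clones (giving $e^{t\lambda(\phi(t)-1)}$), and one must check that conditioning on the tagged birth leaves the law of the remaining process unchanged. It is also worth emphasizing that the age-weight $a^d$ in the joint density is precisely the area weight, so this computation realizes the size-biased pick of Definition~\ref{sbp} in the Poisson setting; this is exactly what lets me identify the normalized age density with the conditional law of $X_{[1]}$.
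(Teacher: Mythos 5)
Your argument is correct, and it reaches the stated density by a genuinely different route from the paper. The paper never works with a Palm/Mecke decomposition: it introduces the two conditional measures $\hat{P}(\cdot)=P(\cdot\,|\,\sigma_2=t)$ and $\tilde{P}(\cdot)=P(\cdot\,|\,\sigma_2>t)$, computes the Radon--Nikodym derivative $d\hat{P}/d\tilde{P}=\aet u_2\bar{s}_2/\bigl(\lambda(1-e^{-\theta t^{d+1}})\bigr)$ (Lemma \ref{lem2}), shows that under $\tilde{P}$ the count $M(t)$ is Poisson with mean $\lambda t\phi(t)$ (Lemma \ref{lemma1}) and the arrival times are order statistics of iid exponentially tilted uniforms (Lemma \ref{iid_ti}), and then expands $\hat{P}(X_{[1]}\in dx)$ through Definition \ref{sbp}, exploiting the cancellation $\bigl(X_i/S_m\bigr)\cdot\sum_j X_j=X_i$ to reduce everything to the single-clone marginal $g_t$. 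Your Slivnyak computation performs the same size-biasing implicitly --- the firing rate $(d+1)\theta a^d$ of the tagged clone is exactly the area weight, and the factor $e^{t\lambda(\phi(t)-1)}$ from the Laplace functional of the untagged clones plays the role of $P(\sigma_2>t)$ --- so the two derivations agree term by term, and your recovery of the Theorem \ref{pdfsig2} density upon integrating out $a$ is a genuine consistency check rather than a coincidence. What your route buys is directness: a single three-factor joint density and one normalization, with no change of measure and no need for the iid order-statistics structure. What the paper's route buys is reusable machinery: the measures $\hat{P},\tilde{P}$, the derivative \eqref{RND}, and the density $g_t$ are all needed again for the distant-field result (Theorem \ref{coro1}) and the recurrence corollaries, where one must track the joint law of \emph{all} non-chosen clones, a setting in which a single-point Palm argument no longer suffices. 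The only point you should make explicit if writing this up is the justification of \eqref{this}-style independence you invoke: given the birth times, the type-2 arrivals of distinct clones are independent inhomogeneous Poisson processes, which is what licenses multiplying the tagged clone's survival factor by the Laplace functional of the rest.
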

   
     \begin{proof}
  See section \ref{app_loc_field}. 
  \end{proof}
  
   The distribution of the local field radius follows now easily. 
 \begin{corollary}\label{field_radius}
  The radius of the local premalignant field at $\sigma_2$, conditioned on the event $\{ \sigma_2 =t\}$, has density
$$
 \hat{P}(R_l(\sigma_2) \in dr) =   \frac{  u_2 \bar{s}_2\,\gamma_d\, r^d}{c_d(s_1)(1-e^{-\theta t^{d+1}})} \exp\left[-\frac{u_2\bar{s}_2\gamma_dr^{d+1}}{c_d(s_1)(d+1)}\right],
$$
for $r\in [0,c_d(s_1) t]$.
 \label{Cor_r}
 \end{corollary}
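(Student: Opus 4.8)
The plan is to obtain the radius density directly from Theorem~\ref{sbp_thm} by a deterministic change of variables, exploiting the exact relation $X_l = \gamma_d R_l^d$ that links the area and radius of the (spherical) local field. Conditioning on $\{\sigma_2 = t\}$ throughout, I would view $R_l(\sigma_2)$ as a strictly monotone function of $X_l(\sigma_2)$ and push forward the density in (\ref{dist1}). First I would record the substitution $x = \gamma_d r^d$, with inverse $r = (x/\gamma_d)^{1/d}$ on the relevant range, and compute the Jacobian $dx/dr = d\gamma_d r^{d-1}$. The target density is then $\hat{P}(R_l(\sigma_2)\in dr) = f_X(\gamma_d r^d)\, d\gamma_d r^{d-1}$, where $f_X$ denotes the density in (\ref{dist1}).

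The remaining work is careful bookkeeping of the powers of $\gamma_d$ and $r$. Substituting $x = \gamma_d r^d$ gives $x^{1/d} = \gamma_d^{1/d} r$ and $x^{(d+1)/d} = \gamma_d^{(d+1)/d} r^{d+1}$. In the prefactor, the factor $\gamma_d^{1/d}$ coming from $x^{1/d}$ cancels the $\gamma_d^{1/d}$ in the denominator of (\ref{dist1}); inside the exponential one has $\gamma_d^{(d+1)/d}/\gamma_d^{1/d} = \gamma_d$; and multiplying by the Jacobian $d\gamma_d r^{d-1}$ cancels the leading factor $1/d$ while producing the overall $\gamma_d r^d$ in the numerator. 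Collecting the surviving terms yields exactly the claimed density, and the support transforms accordingly: $x \in [0,\gamma_d c_d^d(s_1) t^d]$ maps to $r \in [0, c_d(s_1) t]$.

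There is no genuine obstacle here: the map $r \mapsto \gamma_d r^d$ is smooth and strictly increasing on $[0,\infty)$, so the change-of-variables formula for densities applies verbatim and no measure-theoretic subtleties arise. The only point requiring attention is the arithmetic of the $\gamma_d$-exponents, which must combine to leave a single power $\gamma_d^1$ in both the prefactor and the exponent of the final expression; verifying this cancellation is the one place where a careless slip would produce a wrong normalization.
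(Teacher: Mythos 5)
Your proposal is correct and is exactly the argument the paper intends: the corollary is stated as following ``easily'' from Theorem \ref{sbp_thm}, namely by the change of variables $x=\gamma_d r^d$ applied to the density (\ref{dist1}), and your bookkeeping of the Jacobian, the $\gamma_d$-exponents, and the support is all accurate.
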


 \begin{figure}[htb!]
   \centering
   \includegraphics[width=1\textwidth]{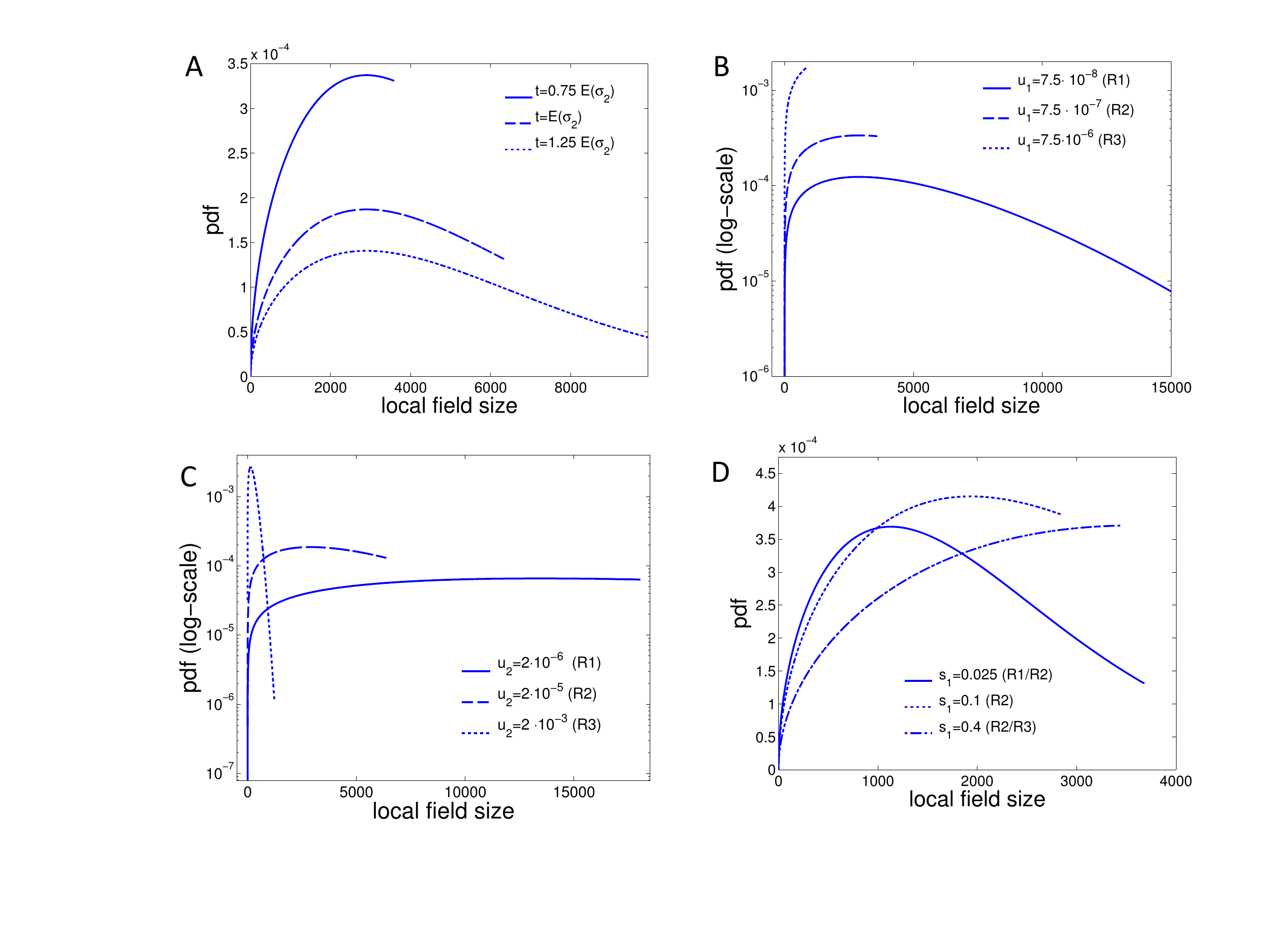} 
   \caption{ {\bf Size-distribution of local field.}  The size-distribution (\ref{dist1}) of the local field is shown for different scenarios, corresponding to the three regimes R1, R2 and R3 illustrated in Figure \ref{fig:regimes}. {\bf A} For varying arrival times $t$; {\bf B} for varying type-1 mutation rates $u_1$; {\bf C} for varying type-2 mutation rates $u_2$; ({\bf D}) for varying type-1 fitness advantages $s_1$. The non-varying parameters are held constant at  $d=2$, $N=2\cdot 10^5$, $u_1=7.5\cdot 10^{-7}$, $u_2=2\cdot 10^{-5}$, $s_1=s_2=0.1$ and $c_2(s_1)=0.16$.  } 
   \label{fig:sbp_compilation}
\end{figure}

Note that the distribution of the local field size (\ref{dist1}) depends on the rate of successful mutations  $u_2\bar s_2$ and the growth rate $c_d(s_1)$, but is independent of $\lambda$, the arrival rate of type-1 mutations. In Figure \ref{fig:sbp_compilation}A, we show how the distribution of the local field area (\ref{dist1}) changes with arrival time of the first successful type-2 clone. As expected, the support of the distribution increases with increasing initiation time, and hence the likelihood of having a large local field increases substantially. This suggests that that tumors appearing later have a higher recurrence probability if only the malignant portion is removed during surgery. The finite support of each probability density function reflects the fact that there is a hard upper bound on the size of a premalignant field at finite time $t$ in the system.

In Figure \ref{fig:sbp_compilation}B,C we illustrate the sensitivity of the size-distribution of the local field to varying mutation rates $u_1$ and $u_2$, conditioned on observing initiation at the expected time $t = E(\sigma_2)$. The mutation rates are tuned to vary across parameter Regimes 1, 2, and 3 as described in the previous section.  Observe that for lower mutation rates, the local field size varies widely (and sometimes close to uniformly) over a large range of values, while elevated mutation rates in both cases signify smaller local fields. For the $u_1$ rate (Figure 7B), an intuitive explanation for this behavior is that as the mutation rate increases, the system moves towards regimes 2 and 3, in which the premalignant field is comprised of an increasing number of independent type-1 patches. The necessary number of man-hours to the first successful type-2 mutation is absolved faster, and hence the size of the  patch that gives rise to the first successful type-2 decreases accordingly. For $u_2$ (Figure 7C) on the other hand, an increase in the mutation rate signifies a move towards regime 1: fewer type-1 clones are required to produce the first successful type-2, and the size of the type-1 field that yields the first type-2 decreases with increasing $u_2$.  Another observation to note is that the local field size varies across the same range of orders of magnitude as the mutation rates. This suggests for example, that carcinogen exposure or environmental causes changing mutation rates by one order of magnitude could result in predicted field sizes impacted similarly by an order of magnitude.

 Finally, we demonstrate the sensitivity of the local field size to the selective advantage $s$ of mutant cells, see Figure \ref{fig:sbp_compilation}D. For a small fitness gain of $s=0.025$, the distribution is peaked at lower field sizes, but as $s$ increases the field size distribution shifts to the right.  High fitness gains are usually  associated with an aggressive tumor phenotype, and Figure \ref{fig:sbp_compilation}D suggests that such tumors may also be associated with large surrounding premalignant fields and thus higher recurrence risks.

\subsection{Size of the distant field at initiation}\label{distfieldinit}
Next we are interested in analyzing the size distribution of the distant field at initiation, which is comprised of premalignant clones that are clonally unrelated to the tumor. Define the vector of areas of the distant premalignant lesions at time $t$ to be $\bar{X}_d(t)$. This vector holds the areas of all premalignant clones except for the local field clone from which the tumor arises. Mathematically speaking, the goal of this section is to characterize the law of $\bar{X}_d(\sigma_2)$ conditioned on the event $\{\sigma_2 =t\}$. Before stating the main result some additional notation is needed. First, define the mapping $\alpha_j(i)$  as follows:
$$
\alpha_j(i) =
\begin{cases}
i, & \mbox{if }  j >i\\
i+1, &\mbox{if } j \leq i. 
\end{cases}
$$
Then, we define the random variable $\tilde{X}_i \equiv X_{\alpha(i)}$, where
$$
\alpha(i)\equiv \sum_{j=1}^{M(t)} \alpha_j(i) 1_{\{X_{[1]} = X_j \} }.
$$
The distribution of $\bar{X}_d(\sigma_2)$ is the joint distribution of $(\tilde{X}_1, \ldots, \tilde{X}_{M(\sigma_2)-1})$, which characterizes the size distribution of the clones in the distant field at time $\sigma_2$. We obtain the following result.
\begin{theorem}\label{coro1}
The  size-distribution of the distant field clones at time $\sigma_2$ of the first successful type-2 mutation, conditioned on $\{\sigma_2=t\}$, is given by
\begin{align*}
\mathcal L( \bar{X}_d|\in dt) =_d \hat{P}(\tilde{X}_1 &\in dx_1, \ldots, \tilde{X}_{M(t)-1} \in dx_{M(t)-1}) \\
&=\frac{1}{1-e^{-\lambda t\phi(t)}} \sum_{m=1}^\infty \frac{(\lambda \phi(t)t)^{m} e^{- \lambda \phi(t) t}}{m!} \prod_{i=1}^{m-1} g_t(x_i),
\end{align*}
where $g_t(x)$ is defined in \eqref{cond_Xi}.
\end{theorem}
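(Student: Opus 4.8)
The plan is to realize the successful type-1 clones as a marked Poisson process and reduce the statement to a thinning-plus-competing-risks computation. First I would record that, by the mesoscopic description, the successful type-1 clones are born according to a Poisson process of rate $\lambda$ (see \eqref{lambda}) on the time axis, and that a clone born at time $u$ has area $\gamma_d c_d^d(s_1)(t-u)^d$ at time $t$. Attaching to each clone an independent ``first successful type-2'' age $\tau$, the area-proportional hazard $u_2\bar{s}_2\gamma_d c_d^d(s_1) a^d$ integrates to the tail $\P(\tau>a)=\exp(-\theta a^{d+1})$ with $\theta$ as in \eqref{theta}, so that $\sigma_2=\min_i(u_i+\tau_i)$ is the first trigger time of this marked process.

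Next I would split the clones alive at time $t$ according to their fate: points with $u+\tau<t$ have already triggered (call this region $R$), points with $u\le t<u+\tau$ are present but untriggered (the survivors, region $S$), and $\{\sigma_2\in dt\}$ asks for exactly one point on the boundary $u+\tau=t$. Since these are disjoint regions of the marked Poisson process, their restrictions are independent. A direct thinning computation then gives that the survivors form a Poisson process whose mean is $\int_0^t\lambda e^{-\theta a^{d+1}}\,da=\lambda t\phi(t)$ (using \eqref{phi}, since averaging $e^{-\theta a^{d+1}}$ over a uniform age on $[0,t]$ produces exactly $\phi(t)$) and whose individual sizes are i.i.d.\ with the law $g_t$ of \eqref{cond_Xi}; the independence of $R$ from $S$ shows that conditioning on ``no earlier trigger'' does not disturb the survivor law.

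I would then identify the initiating (local) clone with the unique boundary point. Because the trigger intensity of a clone of age $a$ is proportional to $a^d$, the boundary point's size is a size-biased draw, which recovers exactly the local-field law of Theorem \ref{sbp_thm}; the remaining survivors constitute the distant field $\bar{X}_d$. To obtain the conditional law of $\bar{X}_d$ given $\{\sigma_2\in dt\}$, I would write the joint density of the triple (number of present clones $m$, their sizes, the identity of the initiator) by summing the competing-risks contribution $\sum_{j}f(a_j)\prod_{i\ne j}e^{-\theta a_i^{d+1}}$ over which clone triggers, average over the uniform birth times, and divide by the marginal density of $\sigma_2$ supplied by Theorem \ref{pdfsig2}. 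Collecting the $m-1$ surviving clones into the product $\prod_{i=1}^{m-1}g_t(x_i)$ and summing over the number of present clones yields the claimed Poisson mixture.

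The main obstacle is the conditioning on $\{\sigma_2=t\}$. Because the instantaneous trigger rate is proportional to the total type-1 area, conditioning on a trigger at $t$ size-biases the whole configuration toward more, and larger, clones, so the conditional law of the number of present clones $M(t)$ is not simply the survivor count naively truncated to be positive; pinning down the combinatorial weight (the sum over which clone is the initiator, and hence the exact $m$-dependent coefficient in the mixture and its normalization $1-e^{-\lambda t\phi(t)}$) is the delicate step. I would fix these coefficients using two consistency constraints that the answer must satisfy: integrating the distant-field density over all sizes and summing over $m$ must reproduce the $\sigma_2$-density of Theorem \ref{pdfsig2}, and marginalizing out the initiator must reproduce Theorem \ref{sbp_thm}. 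Together with the independence established by the thinning, these checks determine the mixture unambiguously and complete the proof.
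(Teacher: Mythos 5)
Your route is genuinely different from the paper's. The paper proves Theorem \ref{coro1} by writing the distant-field law as a size-biased-pick expectation under $\hat P$, converting it to an expectation under $\tilde P(\cdot)=P(\cdot\mid\sigma_2>t)$ via the Radon--Nikodym derivative of Lemma \ref{lem2} (which cancels the $S_{M(t)}$ denominator), and then exploiting the product form of the arrival times under $\tilde P$ (Lemma \ref{iid_ti}) together with an explicit computation of $E[X_j\mid\sigma_2>t,\,M(t)=m]$ and the Poisson law of Lemma \ref{lemma1}. Your marked-Poisson-process picture --- points $(u,\tau)$ with $\{\sigma_2\in dt\}$ the event of exactly one point on the line $u+\tau=t$ and none below it, the distant field being the restriction to the survivor region --- is a legitimate and arguably more transparent alternative, and the thinning computation correctly identifies the survivor intensity $\lambda t\phi(t)$ and the size law $g_t$.

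The problem is that you stop short of finishing it. You single out the determination of the $m$-dependent mixture coefficient as ``the delicate step'' and propose to pin it down by two consistency checks, but those checks cannot distinguish the candidates: both the zero-truncated Poisson weights $\frac{1}{1-e^{-\lambda t\phi(t)}}\frac{(\lambda t\phi(t))^{m}}{m!}e^{-\lambda t\phi(t)}$ and the shifted-Poisson weights $\frac{(\lambda t\phi(t))^{m-1}}{(m-1)!}e^{-\lambda t\phi(t)}$ sum to one over $m\ge1$, and marginalizing out the initiator is insensitive to the survivor-count law precisely because, in your own decomposition, the initiator is independent of the survivors. The gap is closed not by consistency checks but by completing the independence argument you already set up: the restrictions of the Poisson process to the disjoint regions $\{u+\tau<t\}$, $\{u+\tau\in dt\}$ and $\{u\le t<u+\tau\}$ are independent, so conditioning on $\{\sigma_2\in dt\}$ does \emph{not} size-bias the survivor configuration (contrary to the worry in your last paragraph); only the single boundary point is size-biased, and the survivors remain a Poisson$(\lambda t\phi(t))$ number of i.i.d.\ $g_t$ clones. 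This gives $M(t)=1+\mathrm{Poisson}(\lambda t\phi(t))$ under $\hat P$ and the mixture $\sum_{m\ge1}\frac{(\lambda t\phi(t))^{m-1}}{(m-1)!}e^{-\lambda t\phi(t)}\prod_{i=1}^{m-1}g_t(x_i)$ --- which is exactly what the paper's own computation in Section \ref{pr5}, the Remark, and Corollary \ref{lemma2} yield; the truncated-Poisson normalization displayed in the statement of Theorem \ref{coro1} does not match the paper's own proof and appears to be a misprint.
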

\begin{proof}
See section \ref{pr5}.
\end{proof}
\begin{remark} From  Theorem \ref{coro1} and Corollary \ref{lemma2} below, we see that
\begin{align*}
\mathcal L( \bar{X}_d|\sigma_2=t, M(t)=m) =_d \hat{P}(\tilde{X}_1 &\in dx_1, \ldots, \tilde{X}_{m-1} \in dx_{m-1})=\prod_{i=1}^{m-1} g_t(x_i).
\end{align*}
\end{remark}

 \begin{figure}[htb!]
   \centering
   \includegraphics[width=0.6\textwidth]{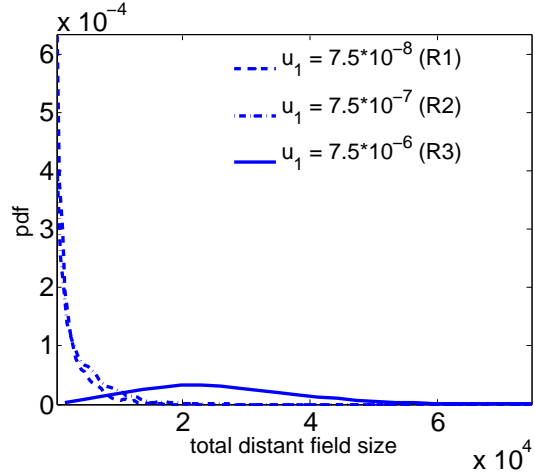} 
   \caption{The distribution of the total size of the distant field is shown for different scenarios, corresponding to the three regimes R1, R2 and R3 illustrated in Figure \ref{fig:regimes} for varying type-1 mutation rates $u_1$.  The non-varying parameters are held constant at  $d=2$, $N=2\cdot 10^5$, $u_2=2\cdot 10^{-5}$, $s_1=s_2=0.1$ and $c_2(s_1)=0.16$.} 
   \label{fig:distant_pdf}
\end{figure}

Figure \ref{fig:distant_pdf} shows how the probability density function of the total distant field size (i.e.\ the sum of all distant field patches) changes with increasing mutation rate $u_1$.   For a comparison to the local field size distribution at the same parameter values, we refer to Figure \ref{fig:sbp_compilation}B.  We note that in regimes 1 and 2 the total distant field size is on the same order of magnitude as the local field size, but in regime three the distant field size is significantly larger than the size of the local field.  As will be investigated in more detail below, this suggests that secondary tumor recurrences for cancer types in regime 3 are much more likely to stem from the distant field, and thus are more likely to be clonally unrelated to the primary tumor.

\subsection{Number of field patches: evolution until initiation}
We next analyze the total number of premalignant lesions over time until tumor initiation.  In particular, the following result holds:
\begin{proposition} Conditioned on $\{\sigma_2=t\}$, we have that for all $\zeta\leq t$, the number of field patches is distributed as a mixture of a Poisson and a shifted Poisson random variable.  In particular,
\begin{align*}
\begin{split}
 P(M(\zeta)=m| \sigma_2 =t)&= p_1(t,\zeta) \, \frac{\lambda^{m} \cb{t\phi(t) - (t-\zeta)\phi(t-\zeta)}^{m}}{(m)!}e^{-\lambda \cb{t\phi(t) - (t-\zeta)\phi(t-\zeta)}}\\
& \qquad +p_2(t,\zeta)\, \frac{\lambda^{m-1} \cb{t\phi(t) - (t-\zeta)\phi(t-\zeta)}^{m-1}}{(m-1)!}e^{-\lambda \cb{t\phi(t) - (t-\zeta)\phi(t-\zeta)}},
\end{split}
\end{align*}
where $p_1(t,\zeta)+p_2(t,\zeta)=1$ and $p_1(t,\zeta) = (1-e^{-\theta (t-\zeta)^{d+1}})/(1-e^{-\theta t^{d+1}})$. In particular,
\begin{align*}
 E(M(\zeta)|\sigma_2=t)=\lambda \cb{t\phi(t) - (t-\zeta)\phi(t-\zeta)} + p_2(t,\zeta).
\end{align*}
\label{clone_dist}
\end{proposition}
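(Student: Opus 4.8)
\section*{Proof proposal for Proposition \ref{clone_dist}}

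The plan is to realize the premalignant patches and the cancer-initiating events as a Cox process and then condition on the first type-2 arrival. Recall that successful type-1 clones are born according to a Poisson process of rate $\lambda$ in time, and that a clone born at time $b$ generates successful type-2 mutations at the time-dependent rate $u_2\bar s_2 X(r)=\theta(d+1)(r-b)^d$ for $r>b$, so its probability of having produced \emph{no} successful type-2 by time $t$ is $\exp(-\theta(t-b)^{d+1})$. The count $M(\zeta)$ is simply the number of type-1 births in $[0,\zeta]$, since successful patches persist. Conditioning on $\{\sigma_2\in dt\}$ amounts to requiring no successful type-2 on $[0,t)$ together with one firing in $[t,t+dt)$; given the clone configuration $\{b_i\}$ the corresponding density factor is
$$
\left(\prod_{i:\,b_i\le t}e^{-\theta(t-b_i)^{d+1}}\right)\left(\sum_{i:\,b_i\le t}\theta(d+1)(t-b_i)^d\right)dt .
$$

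The key structural step is to split both the product and the intensity sum according to whether a clone is born in $[0,\zeta]$ or in $(\zeta,t]$, exploiting the independence of the Poisson process on these disjoint intervals. Writing the sum as $S_A+S_B$ (contributions of clones born before, resp.\ after, $\zeta$) produces two terms in $\hat P(M(\zeta)=m,\sigma_2\in dt)$. In the first, the \emph{initiating} clone lies in $[0,\zeta]$ and is therefore counted by $M(\zeta)$; in the second it lies in $(\zeta,t]$ and is not. This dichotomy is exactly what generates the mixture: the "born-before-$\zeta$" contribution removes one clone from the free count and yields the shift by one, i.e.\ the shifted-Poisson component.

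To evaluate the pieces I would compute the $[0,\zeta]$ factors by conditioning on the number of births (so the birth times are i.i.d.\ uniform on $[0,\zeta]$) and the $(\zeta,t]$ factors by the exponential formula / Mecke's identity for Poisson functionals. Writing $\Lambda\equiv t\phi(t)-(t-\zeta)\phi(t-\zeta)=\int_0^\zeta e^{-\theta(t-b)^{d+1}}db$, I expect $\E\big[1_{\{M(\zeta)=m\}}\prod_{b_i\le\zeta}e^{-\theta(t-b_i)^{d+1}}\big]$ to reduce to $(\lambda\Lambda)^m e^{-\lambda\zeta}/m!$, and its weighted (size-biased) version to contribute the factor $E_1\equiv e^{-\theta(t-\zeta)^{d+1}}-e^{-\theta t^{d+1}}$ through the antiderivative of $\theta(d+1)(t-b)^d e^{-\theta(t-b)^{d+1}}$. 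For $(\zeta,t]$, Mecke's formula applied to $S_B$ times the survival product gives the weight $\lambda(1-e^{-\theta(t-\zeta)^{d+1}})$ multiplying $\exp(-\lambda(t-\zeta)(1-\phi(t-\zeta)))$. Using $(t-\zeta)\phi(t-\zeta)=t\phi(t)-\Lambda$ the common exponential prefactor collapses to $\lambda e^{\lambda t(\phi(t)-1)}e^{-\lambda\Lambda}$; dividing by the density $f_{\sigma_2}(t)$ of Theorem \ref{pdfsig2} then identifies the two terms as $p_1\cdot\mathrm{Poisson}(\lambda\Lambda)\{m\}$ and $p_2\cdot\mathrm{Poisson}(\lambda\Lambda)\{m-1\}$ with $p_1=(1-e^{-\theta(t-\zeta)^{d+1}})/(1-e^{-\theta t^{d+1}})$ and $p_2=E_1/(1-e^{-\theta t^{d+1}})=1-p_1$. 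The mean follows at once from the component means $\lambda\Lambda$ and $\lambda\Lambda+1$, giving $E(M(\zeta)\mid\sigma_2=t)=\lambda\Lambda+p_2$.

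The main obstacle I anticipate is handling the intensity-at-$t$ sum correctly: applying the Palm/Mecke identity to the \emph{weighted} Poisson functional on $(\zeta,t]$ and keeping the split $S_A+S_B$ aligned with the event $\{M(\zeta)=m\}$, so that the two pieces assemble into a clean Poisson / shifted-Poisson mixture rather than an intractable convolution. The delicate bookkeeping is verifying that the stray powers of $\lambda$ and the $\zeta$-dependent normalizations cancel, leaving exactly $\lambda\Lambda$ as the Poisson parameter and $p_1+p_2=1$.
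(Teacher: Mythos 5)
Your proposal is correct and follows essentially the same route as the paper: the mixture arises in both arguments from splitting according to whether the initiating clone is born in $[0,\zeta]$ (size-biasing one of the $m$ counted clones, hence the shifted Poisson) or in $(\zeta,t]$ (leaving the count free, hence the plain Poisson), with identical integrals producing $\Lambda=t\phi(t)-(t-\zeta)\phi(t-\zeta)$, $p_1$ and $p_2$. The only cosmetic difference is that you extract the two terms by splitting the firing intensity and invoking Mecke's identity on $(\zeta,t]$, whereas the paper conditions on the birth-time configuration $\mathcal{E}_\zeta(t_1,\ldots,t_m)$, differentiates $P(\sigma_2>t\mid\cdot)$ in $t$ (the product rule yielding the same two terms), and then integrates over $[0,\zeta]^m$.
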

\begin{proof}
To prove this result we calculate $P(\sigma_2>t|M(\zeta)=m)$ and then differentiate with respect to $t$. Details of the proof are given in section \ref{pr6}.
\end{proof} It is interesting to observe that as $\zeta\to t$ we see that $p_1(t,\zeta)\to 0$, therefore as $\zeta$ gets closer to time $t$ the process looks more like a shifted Poisson. This is stated in the corollary below.

\begin{corollary}\label{lemma2}
\begin{align}\label{res1}
\hat P(M(t)=m)=  \frac{\ob{\lambda \, t\, \phi(t)}^{m-1}}{(m-1)!}e^{-t \lambda \phi(t)}, \quad m\geq 1, 
\end{align}
and $\hat P(M(t)=m)=0$. In particular,
 \begin{align}\label{hier}
 \hat E (M(t))= 1+  E(M(t)| \sigma_2>t) =1+ \lambda t \phi(t) ,
 \end{align}
 where $E(M(t)|\sigma_2>t)$ is discussed in Lemma \ref{lemma1}.
 
\end{corollary}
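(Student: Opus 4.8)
The plan is to read off Corollary \ref{lemma2} as the endpoint case $\zeta=t$ of Proposition \ref{clone_dist}, which is already available for every $\zeta\le t$. First I would rewrite the argument of the two Poisson terms using the definition (\ref{phi}): since $r\phi(r)=\int_0^r e^{-\theta u^{d+1}}\,du$, we have $(t-\zeta)\phi(t-\zeta)=\int_0^{t-\zeta} e^{-\theta u^{d+1}}\,du\to 0$ as $\zeta\uparrow t$, so that $t\phi(t)-(t-\zeta)\phi(t-\zeta)\to t\phi(t)$. At the same time $p_1(t,\zeta)=(1-e^{-\theta(t-\zeta)^{d+1}})/(1-e^{-\theta t^{d+1}})\to 0$, because the numerator vanishes while the denominator stays positive, and hence $p_2(t,\zeta)=1-p_1(t,\zeta)\to 1$. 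Substituting these limits into the mixture of Proposition \ref{clone_dist} removes the unshifted Poisson term and leaves the shifted-Poisson term with weight one, which is exactly (\ref{res1}).

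The only point deserving care is the legitimacy of passing to $\zeta=t$ at the level of distributions. I would argue that, conditioned on $\{\sigma_2=t\}$, the counting variable $M(\zeta)$ increases to $M(t)$ as $\zeta\uparrow t$ and that almost surely no new type-1 patch is born exactly at $t$: type-1 clones arrive according to a Poisson process, and the successful type-2 event at time $t$ occurs \emph{inside} an already present type-1 clone rather than creating a fresh patch. Hence $M(\zeta)\to M(t)$ almost surely, the conditional laws converge, and the pointwise limit of the formula in Proposition \ref{clone_dist} is the conditional law of $M(t)$. Equivalently, since the proposition is stated for all $\zeta\le t$, one may simply set $\zeta=t$ and use $p_1(t,t)=0$ directly.

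For the support statement $\hat P(M(t)=0)=0$, I would note that on $\{\sigma_2=t\}$ the first successful type-2 mutant is born at $t$ and must lie in some type-1 clone, so at least one premalignant patch is present; thus $M(t)\ge 1$ almost surely and (\ref{res1}) has support $m\ge 1$. Finally, (\ref{res1}) presents $M(t)-1$ as a Poisson variable of mean $\lambda t\phi(t)$, giving $\hat E(M(t))=1+\lambda t\phi(t)$; the intermediate identity $\hat E(M(t))=1+E(M(t)\mid\sigma_2>t)$ then follows by inserting the evaluation $E(M(t)\mid\sigma_2>t)=\lambda t\phi(t)$ supplied by Lemma \ref{lemma1}.

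I do not anticipate a genuine obstacle: once Proposition \ref{clone_dist} is established, the corollary is essentially its $\zeta\to t$ limit, and the only mild subtlety is the boundary-continuity check above (that $(t-\zeta)\phi(t-\zeta)$ and $p_1(t,\zeta)$ both vanish). All the substantive probabilistic content lies in the derivation of the mixture representation in Proposition \ref{clone_dist} itself.
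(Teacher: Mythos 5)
Your proposal is correct and follows essentially the same route as the paper: the authors obtain Corollary \ref{lemma2} precisely by observing that $p_1(t,\zeta)\to 0$ (hence $p_2\to 1$) and $(t-\zeta)\phi(t-\zeta)\to 0$ as $\zeta\uparrow t$ in Proposition \ref{clone_dist}, so the mixture collapses to the shifted Poisson with mean $\lambda t\phi(t)$. Your additional checks (the boundary-continuity of the conditional law, the support statement $\hat P(M(t)=0)=0$, and the expectation identity via Lemma \ref{lemma1}) are all sound and merely make explicit what the paper leaves implicit.
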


Using Proposition \ref{clone_dist}, we can study the expected number of field patches of a certain size over time. Figure \ref{fig:temp_clones} shows the temporal dynamics of clone-size distribution in each regime.  In regime 1 the expected number of small clones peaks and then declines as larger clones begin to dominate (consistent with the notion that a single premalignant clone exists prior to initiation), whereas in regimes 2 and 3 we see longer coexistence of large and small clones over time.  
\begin{figure}[htb!]
   \centering
   \includegraphics[width=1\textwidth]{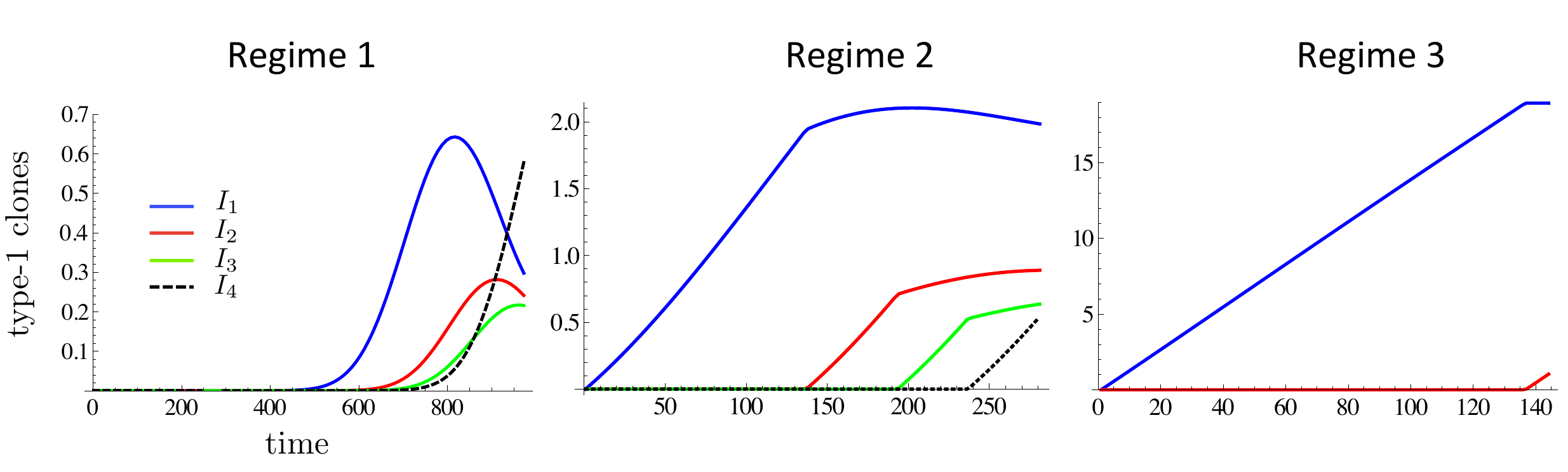} 
   \caption{ {\bf Dynamic clone-size distribution.} For each of the three regimes in Figure \ref{fig:wait_time}, the expected number  of type-1 clones of sizes comprised in the corresponding intervals $I_j$ are shown as  functions of time up to $E(\sigma_2)$ (expectations are conditioned on $\{t=E(\sigma_2)$\}. The intervals are defined as $I_1=[0,1500)$, $I_2=[1500,3000)$, $I_3=[3000,4500)$ and $I_4=[4500,+\infty)$. Parameter values as in Figure \ref{fig:wait_time}.
} 
   \label{fig:temp_clones}
\end{figure}

\begin{remark}\label{remark:conditional_process}
The result in Proposition \ref{clone_dist} can be extended to a result about the entire process $\{M(r):0\leq r\leq t\}$ conditioned on $\sigma_2=t$. We present here the joint distribution of the process at multiple time points, since the proof is similar to Proposition \ref{clone_dist} we do not include it.
For $0\leq r\leq r^\prime\leq t$ define 
$$
\hat\phi(t;r,r^\prime)=\int_r^{r^\prime}e^{-\theta(t-y)^{d+1}}dy.
$$
Then for any positive integer $\ell$, sequence of time points $0<r_1\leq \ldots\leq r_{\ell}<t$ and non-negative integers $k_1\leq k_2\leq \ldots\leq k_{\ell}$ we have that 
\begin{align*}
&\hat{P}\left(M(r_1)=k_1,\ldots,M(r_{\ell})=k_{\ell}\right)\\
&=
\left(\sum_{i=1}^{\ell}\frac{k_i-k_{i-1}}{\hat{\phi}(t;r_{i-1},r_i)}p_i+\lambda p_{\ell+1}\right)\frac{1}{\lambda}\prod_{j=1}^{\ell}\frac{\left(\lambda\hat\phi(t;r_{j-1},r_j)\right)^{k_j-k_{j-1}}}{(k_j-k_{j-1})!}e^{-\lambda\hat\phi(t;r_{j-1},r_j)},
\end{align*}
where for $1\leq i\leq \ell+1$,
$$
p_i=\frac{e^{-\theta(t-r_i)^{d+1}}-e^{-\theta(t-r_{i-1})^{d+1}}}{1-e^{-\theta t^{d+1}}},
$$
$r_0=0$, $k_0=0$, and $r_{\ell+1}=t$. Note that for each $i$, $0<p_i<1$ and $\sum_{i=1}^{\ell+1}p_i=1$, i.e.\ the $p_i$'s form a probability vector.

The above joint distribution is rather difficult to parse, so we describe how one would generate samples of the increments of the process. For $1\leq i\leq \ell$, set $X_i=M(r_i)-M(r_{i-1})$, then we can generate the values of the vector $X_1,\ldots, X_{\ell}$ under the measure $\hat{P}$ as follows.  For each $1\leq i\leq\ell$ sample $X_i$ according to a Poisson distribution with mean $\lambda\hat\phi(t;r_{i-1},r_i)$. Choose an integer $I$ according to the probability vector $(p_1,\ldots,p_{\ell+1})$, if $I=i<\ell+1$ replace $X_i$ with $X_i+1$.

Note that in contrast to the setting of a Poisson process the random variables $X_1,\ldots, X_{\ell}$ are not independent under $\hat{P}$.
\end{remark}

\section{Recurrence predictions}\label{recur_pred}

Tumor recurrence due to field cancerization poses a substantial clinical problem in many epithelial cancers \cite{chai2009field}. We next aim to use the results of the previous section to develop a methodology for assessing the risk of tumor recurrence (as well as the likely type of tumor recurrence) after surgical removal of the primary tumor. 
\subsection{Local vs. distant field recurrence?}\label{locfieldrec}
As discussed above, a recurring tumor can either arise in the same premalignant field (a second field tumor), or it can arise in a clonally unrelated field (second primary tumor). In this section we  characterize the recurrence time distribution for each of these secondary tumor types, and study how the relative likelihood of local vs. distant recurrence depends upon parameters of the tissue and cancer type.
 
To this end, we first study the recurrence time distribution for second field tumors, which arise from the local premalignant field.  Denote the second field recurrence time by $T_R^f$, measured in time units $\tau$ starting from $\tau=0$ at time $\sigma_2$. The time is reset at the tumor initiation time $\sigma_2$, rather than the tumor resection time $\sigma_2 +T_D$, to accommodate the possibility that a recurrence occurs {\it prior} to detection of the primary tumor.  Thus if recurrence occurs at some time $\tau < T_D$, then a secondary tumor already exists at the time of diagnosis of the primary tumor (but may be too small to be detectable).  We assume that the primary tumor node is completely resected once it becomes detectable at time $T_D$, leaving the surrounding field intact (i.e.\ there are no excision margins).


At time $\sigma_2$ a successful type-2 cell arises from a premalignant clone of radius $R_l(\sigma_2)$, whose distribution is characterized in Corollary \ref{Cor_r}.  If $R_l(\sigma_2) = r$, the incidence rate of successful type 2 mutations within this field is given by
\begin{align}\label{this}
\eta(r,\tau)\equiv u_2 \bar s_2 \gamma_d \cb{\ob{r + c_d(s_1) \tau}^d - \bar{c}_d^d(s_2)\ob{\tau\wedge T_D}^d},
\end{align}
where $\bar{c}_d(s_2)$ is the rate of expansion of the malignant cells into the type-1 field.  A new variable $\bar{c}_d(s_2)$  is introduced to account for the fact that the malignancy may grow both upwards into the epithelial layer and sideways into the field, depending on the type of tissue involved (see also Figure \ref{fig:epi_cartoon}B). 
\begin{corollary}\label{cori}
The probability of a second field tumor having formed before time $\tau$ (measured from $\sigma_2$), conditioned on $\{\sigma_2 =t\}$, is given by 
$$
\hat P(T_R^f<\tau) =1- \frac{\gamma_d  u_2 \bar{s}_2 }{c_d(s_1)(1-e^{-\theta t^{d+1}})}  \int_0^{c_d(s_1) t} r^d\exp \left[ -\frac{u_s\bar{s}_2\gamma_d}{c_d(s_1)(d+1)} r^{d+1} -\int_0^\tau \eta(r,s)ds\right] dr.
$$ 
 In particular, $\hat P(T_R^f<T_D)$ is the probability that smaller, possibly undetectable second field tumors exist at the time of diagnosis.
\end{corollary}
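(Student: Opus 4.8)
The plan is to condition on the radius $R_l(\sigma_2)=r$ of the local field at initiation, whose law under $\hat P$ is given by Corollary \ref{Cor_r}, and to reduce the computation to a void probability of a time-inhomogeneous Poisson process. First I would record the geometry: in the mesoscopic model the local premalignant clone is a ball whose radius grows deterministically at rate $c_d(s_1)$, so at time $\tau$ after $\sigma_2$ it has area $\gamma_d\ob{r+c_d(s_1)\tau}^d$, while the malignant type-2 nodule expands at rate $\bar c_d(s_2)$ until it is resected at $T_D$ and thereafter stops, occupying area $\gamma_d\bar c_d^d(s_2)\ob{\tau\wedge T_D}^d$. The portion of the local field still able to spawn a second successful type-2 mutation is the difference of these two areas, and since each unit of premalignant area produces successful type-2 cells at rate $u_2\bar s_2$, the instantaneous incidence rate of second field tumors is exactly $\eta(r,\tau)$ as defined in \eqref{this}.

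Next I would argue that, conditioned on $\{\sigma_2=t\}$ and on $R_l(\sigma_2)=r$, the successful type-2 mutations arising in the local field after $\sigma_2$ form a time-inhomogeneous Poisson process with intensity $\eta(r,\cdot)$. Granting this, the event $\{T_R^f\geq\tau\}$ is precisely the event that this process has no arrival on $[0,\tau)$, whose probability is the void probability
$$
\hat P\ob{T_R^f\geq \tau \mid R_l(\sigma_2)=r}=\exp\ob{-\int_0^\tau \eta(r,s)\,ds}.
$$
Integrating this against the density of $R_l(\sigma_2)$ from Corollary \ref{Cor_r} over $r\in[0,c_d(s_1)t]$ and taking the complement yields
$$
\hat P(T_R^f<\tau)=1-\frac{\gamma_d u_2\bar s_2}{c_d(s_1)(1-e^{-\theta t^{d+1}})}\int_0^{c_d(s_1)t} r^d\exp\cb{-\frac{u_2\bar s_2\gamma_d}{c_d(s_1)(d+1)} r^{d+1}-\int_0^\tau\eta(r,s)\,ds}\,dr,
$$
which is the claimed expression (with the evident correction $u_s=u_2$ in the stated formula).

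The main obstacle is rigorously justifying the Poisson structure of the post-initiation recurrence process \emph{under} the conditioning $\{\sigma_2=t\}$: conditioning on the first successful type-2 arrival occurring exactly at $t$ could in principle bias the subsequent arrivals. What rescues the argument is that in the mesoscopic model successful type-2 mutations arrive as a Poisson process whose intensity depends only on the deterministically evolving geometry of the type-1 field. By the independence of increments of this Poisson process together with the strong Markov property at $\sigma_2$, the conditioning influences the recurrence dynamics only through the configuration of the field at that instant, which for the local clone is summarized by the single radius $r$. Hence the recurrence clock started at $\sigma_2$ is a fresh inhomogeneous Poisson process with intensity $\eta(r,\cdot)$, independent of the past given $r$, which is exactly the input the void-probability computation requires. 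A secondary point to check is that $r$ indeed captures all relevant information — i.e.\ that the local field contributes no successful type-2 cells before $\sigma_2$ by the very definition of $\sigma_2$ as the first such event — so that no additional conditioning on the pre-$\sigma_2$ history is needed beyond fixing the radius.
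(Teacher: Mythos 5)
Your proposal is correct and follows essentially the same route as the paper: decompose over the local field radius $R_l(\sigma_2)$, use the inhomogeneous-Poisson void probability $\exp\ob{-\int_0^\tau\eta(r,s)\,ds}$ for the conditional survival of $T_R^f$, and integrate against the density from Corollary \ref{Cor_r}. The paper simply asserts the conditional exponential formula without comment, whereas you supply the (correct) justification that the conditioning on $\{\sigma_2=t\}$ enters only through the radius $r$; you also correctly flag the typo $u_s$ for $u_2$ in the stated formula.
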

\begin{proof}
See section \ref{pr9}.
 \end{proof}
In Figure \ref{fig:local_rec}A  the cumulative distribution function of $T_R^f$ as calculated in Corollary \ref{cori} is shown, for varying values of type-2 mutation rates $u_2$. As one might expect, higher mutation rates yield a decreased time to recurrence (the curves shift to the left for increasing $u_2$). However, considering that the size of the premalignant field at initiation of the primary tumor is inversely proportional to the mutation rate $u_2$, see Figure \ref{fig:local_rec}B, the decrease in time to recurrence is {\it a priori} not obvious: a bigger precancer field increases the chance of fast recurrence. This example illustrates how a quantitative model enables us to assess the relative importance of competing aspects of the system  - in this case, the impact of larger premalignant field versus higher mutation rates on recurrence likelihood.
 \begin{figure}[htb!]
   \centering
  \includegraphics[width=1\textwidth]{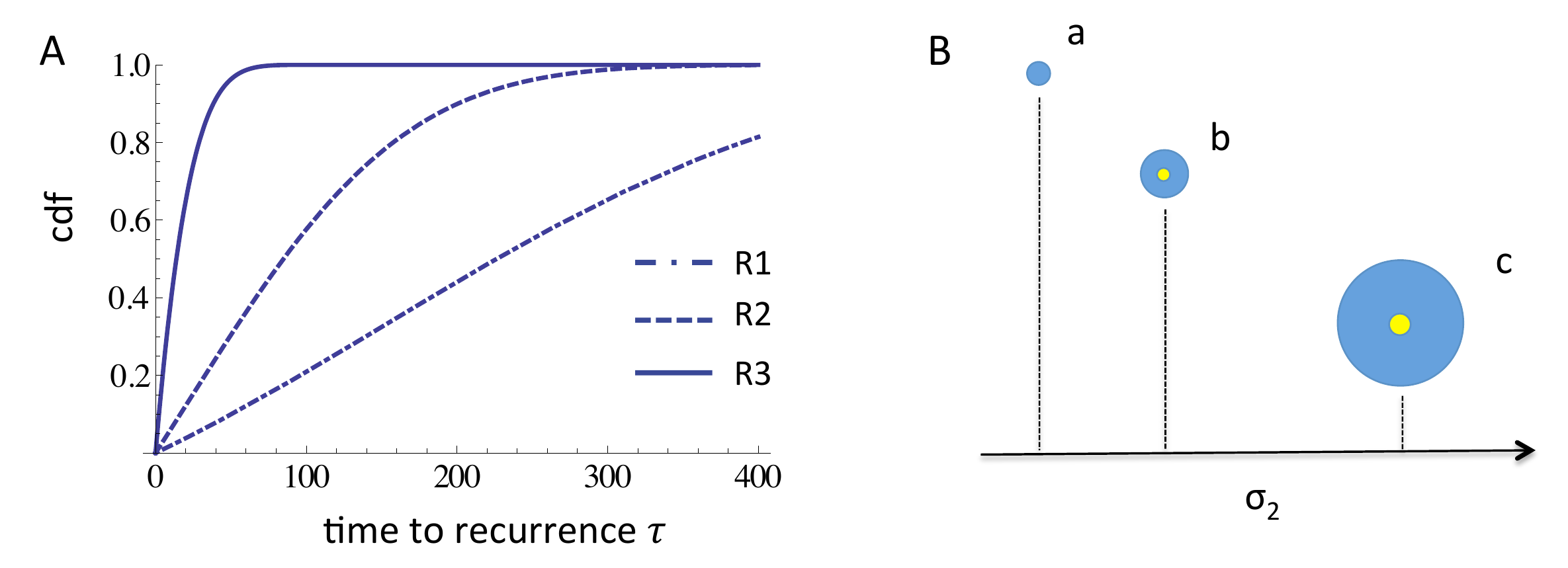} 
   \caption{ {\bf Time to local recurrence.} {\bf A} The cumulative distribution function of the time to recurrence of a second field tumor is shown for three different scenarios, corresponding to $u_2 = 2\cdot 10^{-3}$ (Regime 1), $u_2 = 2\cdot 10^{-5}$ (Regime 2) and $u_2 = 2\cdot 10^{-3}$ (Regime 2/3), respectively.  The remaining parameters are  $d=2$, $N=2\cdot 10^5$, $u_1=7.5\cdot 10^{-7}$,  $s_1=s_2=0.1$,  $t=E(\sigma_2)$. {\bf B} Schematic of the relative initiation times of the primary tumor (yellow) and sizes of the local fields (blue), for the three scenarios in panel A. The numerical values for expected initiation time and local field size are: {\bf (a)} $\E(\sigma_2)=123$, $\hat E(R_l)=8$; {\bf (b)} $\E(\sigma_2)=281$, $\hat E(R_l)=31$; {\bf (c)} $\E(\sigma_2)=474$, $\hat E(R_l)=55$. } 
   \label{fig:local_rec}
\end{figure} 



\vspace{0.1in}
If the recurrence does not take place in the local field giving rise to the first successful type-2 clone, then it either arises from one of the type-1 clones already present at time of initiation (i.e.\ the distant field), or it arises in a type-1 clone formed after initiation. In the latter case, the waiting time is again distributed as $\sigma_2$, and hence we focus here on the  distribution of the waiting time $T_R^p$, defined as the time from $\sigma_2$ until a second primary tumor arises from the distant field already existing at $\sigma_2$. We have the following result.
\begin{corollary}\label{dist_recur}
The probability that the distant field at the time of initiation gives rise to a second primary tumor by time $\tau$ (measured from $\sigma_2$), conditioned on $\{\sigma_2 =t\}$, is given by 
\begin{align*}
 P(T_R^p>\tau |  \sigma_2=t) &   =  \exp\cb{-\lambda t \phi(t) \ob{1-d \gamma_d \Phi(\tau,t)}}
\end{align*}
where
\begin{align*} \Phi(\tau,t)=\int_0^\infty \exp\ob{-\int_0^\tau \eta \ob{r,s} ds} r^{d-1}  g_t(r_i ^d  \gamma_d) dr,
\end{align*}
and $g_t$ is defined in \eqref{cond_Xi}.
\end{corollary}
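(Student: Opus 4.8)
The plan is to condition on the number and sizes of the distant-field clones, reduce the event $\{T_R^p>\tau\}$ to an independent product of single-clone survival events, and then average using the Poisson structure already established. Throughout I write $\hat{P}(\cdot)$ for the conditional law $P(\cdot\mid\sigma_2=t)$, so the claim is an identity for $\hat{P}(T_R^p>\tau)$. First I would set $K\equiv M(t)-1$, the number of type-1 clones making up the distant field (all clones present at $\sigma_2$ except the size-biased local-field clone $X_{[1]}$). By Corollary \ref{lemma2}, under $\hat{P}$ the variable $K$ is Poisson with mean $\lambda t\phi(t)$; and by the Remark following Theorem \ref{coro1}, conditioned additionally on $\{M(t)=m\}$ the $m-1$ distant clone areas $\tilde X_1,\dots,\tilde X_{m-1}$ are i.i.d.\ with common density $g_t$.

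Next I would compute the per-clone survival probability. In the mesoscopic model a type-1 clone of radius $r$ at $\sigma_2$ grows to radius $r+c_d(s_1)s$ a time $s$ later, and it produces successful type-2 mutants as an inhomogeneous Poisson stream with intensity $\eta(r,s)$ from \eqref{this}. Hence the probability that this clone has spawned no second primary tumor by time $\tau$ is the void probability $\exp\ob{-\int_0^\tau\eta(r,s)\,ds}$. Averaging over the clone's random area with density $g_t$ and changing variables from area $x=\gamma_d r^d$ to radius $r$ (so $dx=d\gamma_d r^{d-1}\,dr$) yields the single-clone survival probability
\begin{align*}
p(\tau,t)\equiv\int_0^\infty \exp\ob{-\int_0^\tau\eta(r,s)\,ds}\,g_t(\gamma_d r^d)\,d\gamma_d r^{d-1}\,dr=d\gamma_d\,\Phi(\tau,t).
\end{align*}

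Finally I would combine the ingredients. Since distinct distant clones generate type-2 mutations independently given their sizes, conditioning on $K=k$ gives $\hat{P}(T_R^p>\tau\mid K=k)=p(\tau,t)^k$; summing against the Poisson law of $K$ is exactly the evaluation of the Poisson probability generating function at $z=p(\tau,t)$, so
\begin{align*}
\hat{P}(T_R^p>\tau)=\hat E\cb{p(\tau,t)^K}=\exp\ob{\lambda t\phi(t)\,(p(\tau,t)-1)}=\exp\cb{-\lambda t\phi(t)\ob{1-d\gamma_d\Phi(\tau,t)}},
\end{align*}
which is the claimed identity.

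I expect the main obstacle to be the second step: rigorously justifying that the first successful type-2 arrival within a single growing distant clone has survival function $\exp\ob{-\int_0^\tau\eta(r,s)\,ds}$ — that is, that the area-proportional mutation mechanism of the mesoscopic model yields an inhomogeneous Poisson stream with precisely the intensity $\eta(r,\cdot)$ of \eqref{this}, with the correct accounting of the $\bar c_d(s_2)$ correction for this clonally unrelated setting — together with verifying that the clone-wise arrival streams are mutually independent given the clone sizes, which is what legitimizes the product and generating-function steps. Once the void-probability formula and this independence are in hand, the change of variables and the Poisson summation are routine.
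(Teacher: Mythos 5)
Your proposal is correct and follows essentially the same route as the paper: the paper likewise conditions on $M(t)=m$ and $\sigma_2=t$, uses the i.i.d.\ density $g_t$ for the distant clone areas (which it re-derives via the Radon--Nikodym derivative of Lemma \ref{lem2} rather than citing the Remark after Theorem \ref{coro1} as you do), applies the per-clone void probability $\exp\ob{-\int_0^\tau\eta(r,s)\,ds}$ with the same area-to-radius change of variables, and then sums $(d\gamma_d\Phi)^{m-1}$ against the shifted-Poisson law of $M(t)$ — which is exactly your probability-generating-function step for $K=M(t)-1$.
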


\begin{proof}
See section \ref{pr10}.
\end{proof}

Thanks to the results in this section, it is now possible to evaluate the probability of local versus distant tumor recurrences in each parameter regime.  Corollary \ref{cori} explicitly provides the probability density function  $\hat{P}(T_R^f \in d\tau)$, which is the probability that a second field tumor arises at time $\tau$ from the same field that gave rise to the primary tumor.  To obtain the corresponding probability density function for recurrence as a second primary tumor, we have to consider recurrences due to distant field lesions that have arisen before and after $\sigma_2$. While Corollary \ref{dist_recur} characterizes the recurrence risk due to distant lesions already present at initiation, the time to a successful second primary tumor from a distant field not yet present at initiation is distributed as $\sigma_2$, see Theorem \ref{pdfsig2}.  Therefore, the distribution of interest is that of $\tilde T_R^p=\min\{T_R^p,\sigma_2\}$, which is the time of the first distant recurrence event. 

In Figure \ref{fig:local_vs_distant} we study how the comparison between the probability density functions of $T_R^f$ (second field tumor, local) and $\tilde T_R^p$ (second primary tumor, distant) varies in regimes 1, 2 and 3. The likelihood of local vs. distant recurrences depends strongly upon both the timing and parameter regime of the system  In regime 1, local recurrence is significantly more likely overall, but at late times the probability of distant recurrences is slightly higher than for local recurrences.  In contrast, in regimes 2 and 3 the overall probability of local and distant recurrences are comparable.  However, in regime 2, at early times distant field recurrences are more likely, whereas the opposite is true at later times. The same observation, but even more pronounced, holds in regime 3.  These results show that fundamental tissue parameters (e.g.\ renewal rate, size, mutation rate) could be used to provide insights into the timing and clonal relatedness of tumor recurrences, and suggest a new strategy for prognosis prediction and risk stratification.


  \begin{figure}[htb!]
   \centering
  \includegraphics[width=.7\textwidth]{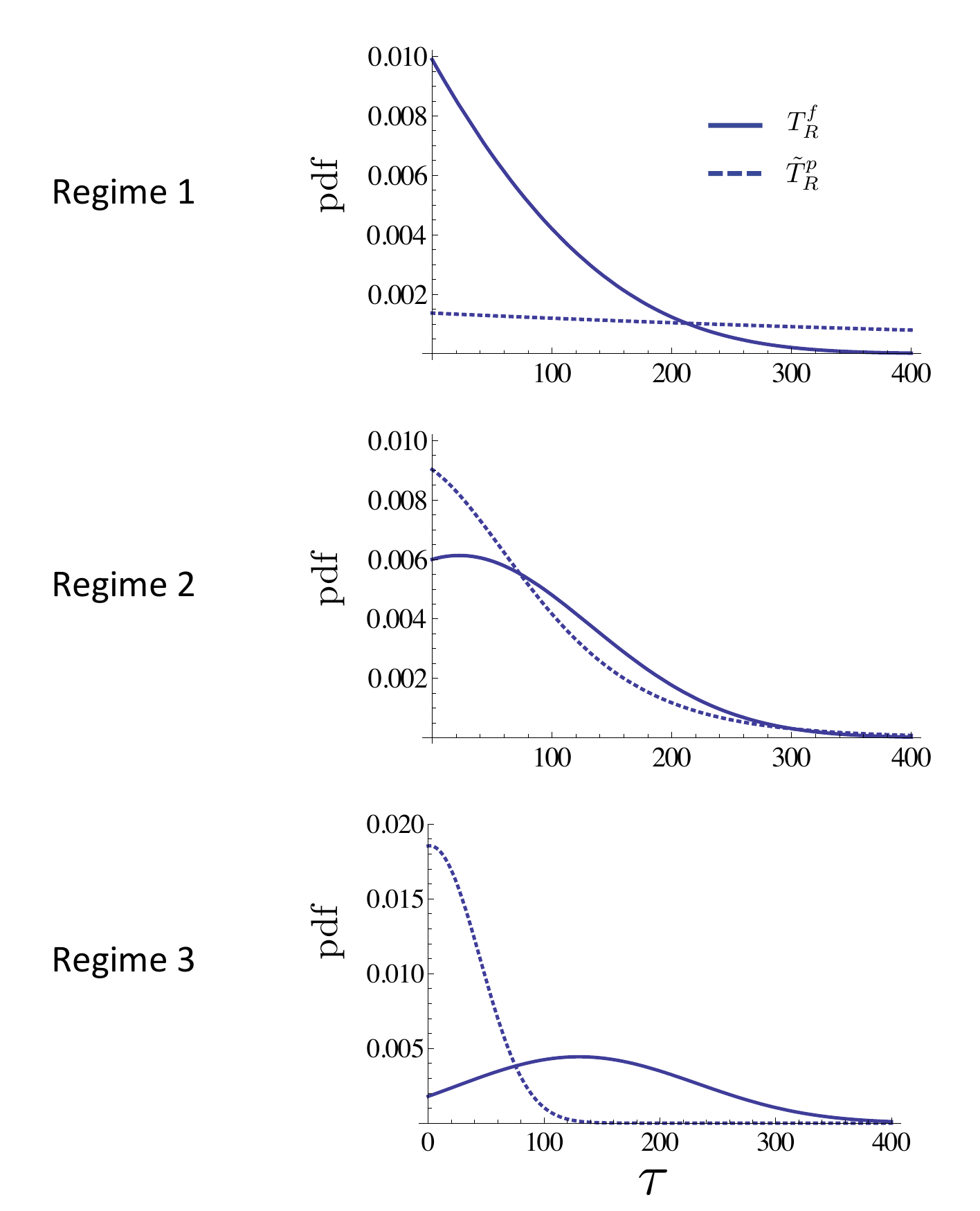} 
   \caption{ {\bf Local vs. distant recurrence.} {\bf A} For each of the three regimes in Figure \ref{fig:wait_time}, we show: the distribution of time to local recurrence $\hat P(T_R^f\in d\tau)$, and the distribution of time to distant recurrence $\hat P(\tilde T_R^p\in d \tau)$. The  distribution of $T_R^f$ is given in Corollary \ref{cori} and we set $\tilde T_R^p=\min\{T_R^p,\sigma_2\}$  to account both for contributions from type-1 clones already existing at $\sigma_2$ as well as contributions from type-1 clones born after $\sigma_2$ (for which time to recurrence is distributed as $\sigma_2$). Expected times to recurrence: $\hat E(T_R^f)=81$ and $\hat E(\tilde T_R^p)=733$ (Regime 1);   $\hat E(T_R^f)=98$ and $\hat E(\tilde T_R^p)=86$ (Regime 2);  $\hat E(T_R^f)=149$ and $\hat E(\tilde T_R^p)=34$ (Regime 3). The parameter values are as in Figure \ref{fig:wait_time}. } 
   \label{fig:local_vs_distant}
\end{figure}

\section{Conclusions and outlook}

In this study we performed a quantitative analysis of the cancer field effect by means of a spatial stochastic model of cancer initiation, which had previously been introduced in \cite{DuFoLe12}. 
Based on this model, we investigated various characteristics of premalignant fields at the time of tumor initiation. In particular, we derived the size-distributions of the local field (the premalignant lesion that gives rise to the tumor) and the distant field (the premalignant lesions that are unrelated to the primary tumor). We calculated the  dynamic clone size distribution at times leading up to initiation, and derived the probability density functions of local and distant recurrence times. Finally, we compared the relative likelihood of second field versus second primary tumors, and 
 demonstrated how the clonal relatedness between primary and recurrent tumors depends explicitly upon tissue and cancer type parameters.

Using an example set of parameters in two space dimensions (which is appropriate for describing the cancer initiation process in the basal layer of a stratified epithelium), we found that lower mutation rates (such as in regime 1)  were associated with larger local field sizes, whereas higher mutation rates (regimes 2 and 3) led to smaller local fields. We also found that higher mutation rates resulted in larger distant fields, while more aggressive cancers (high selective advantage) led to larger local fields at diagnosis. Finally, we investigated the risk of recurrence after surgical resection of the malignant portion, and found that for low mutation rates (regime 1), local recurrence is much more likely, whereas for larger mutation rates (regimes 2 and 3), the overall probability of local and distant recurrences are comparable.  However, in regimes 2 and 3, early recurrences are more likely to be a second primary tumor, whereas the late recurrences are more likely to be second field tumors.


One important limitation of our approach is that the model captures a specific sequence of genetic alterations with specified $u_i$ and $s_i$, and does currently not allow for permutations of genetic events and divergent pathways. Nevertheless, our model may provide a useful framework for comparing different biological hypotheses and disentangling divergent genetic pathways among cancer subtypes. In particular, it enables us to predict differences in observable dynamics such as initiation times and prognoses between different molecular models. Such an approach could help elucidating the sequence of genetic events during carcinogenesis, and will be the subject of future work.  Another limitation of our framework is that we have assumed a static, uniform microenvironment within the tissue. The local microenvironment is in reality determined by a variety of time- and space-dependent factors such as glucose, oxygen, growth factors, drugs and cytokine concentrations. In addition to impacting the growth and mutation rates of cells within the tissue, the local microenvironment is increasingly being recognized as playing an important role in carcinogenesis through stromal signaling.  Lastly, in order to apply this framework to specific cancer types (as will be done in a follow-up work), appropriate parameter values for the model need to be determined.  Literature estimates can usually be obtained for tissue parameters such as compartment size and tissue regeneration rates.  In addition, baseline estimates for point mutation rates are available, usually down to an uncertainty of 1-2 orders of magnitude.  Estimates of the relative selective advantage conferred by each mutation are more difficult to ascertain; however some estimates can be obtained from proliferation marker staining and {\it in vitro} studies.

In summary, the analyses performed in this work contribute towards a quantitative understanding of how organ-specific physiological parameters (e.g.\ number of proliferative cells, tissue renewal rates), as well as pathway-specific parameters (e.g.\ cellular mutation rate, selective advantages conferred by each oncogenic mutation) influence the process of field cancerization and the associated risks of recurrence. We demonstrated that tumor recurrence dynamics and premalignant field characteristics are strongly dependent upon these parameters, which vary across  different tissue and cancer types.  Once properly calibrated for a specific tissue and cancer type, the proposed methodology can potentially be used to provide insights into key prognostic factors such as risk of multifocal lesions and tumor recurrence,  surveillance guidelines, and treatment design. For example, we are able to assess the likelihood and timing of local versus distant recurrences after surgical resection.  Since this distinction provides information on the level of clonal relatedness between primary and recurrent tumors, the model predictions may provide insights into whether treatment strategies effective for primary tumors will be useful for recurrent tumors in particular cancer types.  In addition, our methodology can be utilized to assess the relative benefits of surgical excision margins (removal of apparently healthy tissue surrounding tumors), and to help determine the minimal margins necessary to prevent recurrence in each tissue type. We consider the work presented here as a {\it proof of concept}: a mathematical framework which, once properly calibrated, refined and validated, may provide a useful tool in molecular epidemiology (e.g.\ mechanistic modeling of carcinogen exposure), as well as the development of probabilistic models for personalized treatment approaches. 


\section{Acknowledgements}
We thank Rick Durrett for insightful discussions on this project as well as his useful  
suggestions on the manuscript.
\section{Appendix: Proofs} \label{s_pfs}

\subsection{Proof of Theorem \ref{sbp_thm}}\label{app_loc_field}
To prove Theorem \ref{sbp_thm}, we first need a few new definitions and preliminary results.  Define $V(t)$ to be the random total space-time volume covered by successful type-1 families until time $t$,
\begin{align}
V(t) = \sum_{i=1}^{M(t)} \gamma_d c_d^d(s_1) \frac{(t-T_i)^{d+1}}{d+1},
\label{eq:Vrand}
\end{align}
where  $T_i$ represents the arrival time of the $i$-th family, and $M(t)$ is the total number of successful arrivals by time $t$, which is a Poisson process with rate $\lambda$.  Let $\vet$ represent the space-time volume conditioned on the event 
$$\mathcal{E}_{t}(t_1, \ldots, t_m) \equiv \{ M(t)=m, T_1 \in dt_1, \ldots T_m \in dt_m\},$$ where $0 < t_1 < \cdots < t_m < t$. In other words, 
\begin{align}
V_{\mathcal{E}_t}\equiv\frac{\gamma_dc_d^d(s_1)}{d+1}\sum_{i=1}^m(t-t_i)^{d+1}.
\label{eq:Vdet}
\end{align}
For ease of notation we replace $V_{\mathcal{E}_{t}(t_1, \ldots, t_n)}$ with the more compact version $V_{\mathcal{E}_t}$. 
Since $E[V(t)] = E[E[V(t) | M(t)]]$ and the conditioned process is a compound Poisson process, we obtain that
$$
E[V(t)] = \sum_{m=0}^\infty P(M(t)=m) \frac{m \gamma_d c_d^d(s_1)}{d+1} E[(t-T_i)^{d+1}] = \lambda \gamma_d c_d^d(s_1) \frac{t^{d+2}}{(d+2)(d+1)}.
$$
Similarly, we define $A(t)$ to be the total area of clones covered by successful type-1 families at time $t$,
\begin{align} A(t) \equiv \sum_{i=1}^{M(t)} \gamma_d c_d^d(s_1)(t-T_i)^d,
\label{eq:Arand}
\end{align} and we define $\aet$ to be this quantity conditioned on $\Ec$,
\begin{align} \aet \equiv \sum_{i=1}^{m} \gamma_d c_d^d(s_1)(t-t_i)^d.
\label{eq:Adet}
\end{align}
Note that
\begin{align}\label{EA}
E[A(t)] = \sum_{m=0}^\infty P(M(t)=m) m \gamma_d c_d^d(s_1) E[(t-T_i)^d] = \lambda \gamma_d c_d^d(s_1) \frac{t^{d+1}}{d+1}.
\end{align}

By considering the space-time volume of type-1 clones we can calculate $P(\sigma_2>t|\mathcal{E}_t(t_1,\ldots,t_m)$ and $P(\sigma_2>t|M(t)=m)$. Combining these two formulas and using Bayes rule we get the following result for the joint distribution of the arrival times of successful type-1 mutations, conditioned on the total number of mutations by time $t$.  
\begin{lemma}\label{lemlem}
\label{iid_ti}
Conditioned on $\set{\sigma_2>t}$ and $\set{M(t)=m}$, the arrival times of successful type-1 clones $(T_1, \ldots, T_m)$  are distributed as order statistics of iid random variables as follows:
\begin{align*}
P(T_1\in dt_1,\ldots, T_m \in dt_m |\sigma_2>t, M(t)=m) = \frac{m!}{t^m \phi(t)^m}\prod_{i=1}^m e^{-\theta(t-t_i)^{d+1}}
\end{align*}
where $0 < t_1 < \cdots < t_m < t$.
\end{lemma}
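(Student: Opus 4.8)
The plan is to obtain the target conditional density by Bayes' rule. Writing $\Ec$ for the configuration event $\{M(t)=m,\ T_1\in dt_1,\ldots,T_m\in dt_m\}$, I would decompose
\begin{align*}
&P(T_1\in dt_1,\ldots,T_m\in dt_m\mid \sigma_2>t,\,M(t)=m)\\
&\qquad=\frac{P(\sigma_2>t\mid \Ec)\,P(T_1\in dt_1,\ldots,T_m\in dt_m\mid M(t)=m)}{P(\sigma_2>t\mid M(t)=m)},
\end{align*}
and evaluate each of the three factors in turn. The middle factor is immediate: since $M(\cdot)$ is a rate-$\lambda$ Poisson process, conditioned on $\{M(t)=m\}$ the ordered arrival times are distributed as the order statistics of $m$ i.i.d.\ uniforms on $[0,t]$, giving joint density $m!/t^m$ on the simplex $0<t_1<\cdots<t_m<t$.

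The main step is the numerator factor $P(\sigma_2>t\mid\Ec)$, the probability that no successful type-2 mutation has arisen by time $t$ given the full type-1 configuration. In the mesoscopic model the successful type-2 mutations form a Poisson process whose intensity at time $s$ is $u_2\bar s_2$ times the total type-1 area present at $s$; since clone $i$ born at $t_i$ contributes area $\gamma_d c_d^d(s_1)(s-t_i)^d$ for $s>t_i$, the integrated hazard up to $t$ is exactly $u_2\bar s_2\vet$, with $\vet$ the space-time volume of (\ref{eq:Vdet}). Using the additivity of $\vet$ over clones together with $\theta=u_2\bar s_2\gamma_d c_d^d(s_1)/(d+1)$ from (\ref{theta}), this yields $P(\sigma_2>t\mid\Ec)=\exp(-u_2\bar s_2\vet)=\prod_{i=1}^m e^{-\theta(t-t_i)^{d+1}}$.

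It remains to compute the normalizer $P(\sigma_2>t\mid M(t)=m)$ by integrating the product of the two factors above over the simplex. Because the integrand $\prod_i e^{-\theta(t-t_i)^{d+1}}$ is symmetric in its arguments, the ordered integral times $m!$ equals the unordered integral over $[0,t]^m$, which factorizes as $\big(\int_0^t e^{-\theta(t-s)^{d+1}}\,ds\big)^m$. The substitution $r=t-s$ and the definition (\ref{phi}) of $\phi$ give $\int_0^t e^{-\theta(t-s)^{d+1}}\,ds=t\phi(t)$, so $P(\sigma_2>t\mid M(t)=m)=\phi(t)^m$. Substituting the three factors back into Bayes' rule produces $\frac{m!}{t^m\phi(t)^m}\prod_{i=1}^m e^{-\theta(t-t_i)^{d+1}}$, exactly as claimed; the order-statistics interpretation follows since each underlying i.i.d.\ variable then has density $e^{-\theta(t-s)^{d+1}}/(t\phi(t))$ on $[0,t]$. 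I expect the only subtle point to be the second step: justifying that, conditioned on the type-1 configuration, the type-2 arrivals constitute a Poisson process with area-proportional intensity, so that the survival probability reduces cleanly to $\exp(-u_2\bar s_2\vet)$ and factorizes over clones. Everything else is bookkeeping with the Poisson order-statistics property and a single one-dimensional integral.
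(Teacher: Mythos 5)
Your proposal is correct and follows essentially the same route as the paper: Bayes' rule applied to the configuration event $\Ec$, the key identity $P(\sigma_2>t\mid\Ec)=\exp(-u_2\bar s_2\vet)=\prod_{i=1}^m e^{-\theta(t-t_i)^{d+1}}$, and the normalizer $P(\sigma_2>t\mid M(t)=m)=\phi(t)^m$, which the paper obtains implicitly by writing the denominator as $\left(E\exp\left(-u_2\bar s_2\gamma_dc_d^d(s_1)(t-T)^{d+1}/(d+1)\right)\right)^m$ with $T$ uniform on $[0,t]$. The only presentational difference is that you factor the Bayes quotient through the uniform order-statistics density while the paper works directly with $P(\sigma_2>t,\Ec)=\lambda^m e^{-\lambda t}\exp(-u_2\bar s_2\vet)$; these are algebraically identical.
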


\begin{proof} The arrival process of successful type-1 mutations is represented by $M(\cdot)$, which is a Poisson process with rate $\lambda = Nu_1 s_1/(1+s_1)$ and arrival times $T_1,T_2,\ldots$.   Then for any $t>0$ and sequence $0<t_1<\cdots<t_m<t$ we have that
\begin{align}\label{hier}
P(\Ec)=\lambda^me^{-\lambda t}.
\end{align}
Since
\begin{align}
P(\sigma_2>t | \Ec)=\exp(-u_2\bar{s}_2\vet),
\label{eq:sigma_2_cond}
\end{align}
we find using Bayes' rule
$$
P(\sigma_2>t,\Ec)=\lambda^me^{-\lambda t}\exp(-u_2\bar{s}_2 \vet).
$$
It follows then that
\begin{align*}
&P(T_1\in dt_1,\ldots, T_m \in dt_m |\sigma_2>t, M(t)=m)=\frac{P(\sigma_2>t,\Ec)}{P(\sigma_2>t|M(t)=m)P(M(t)=m)}\\
&=
\frac{\lambda^m e^{-\lambda t}\exp\left(-u_2\bar{s}_2\vet\right)}{P(\sigma_2>t|M(t)=m)e^{-\lambda t}(\lambda t)^m/m!}\\
&=
\frac{m!}{t^m}\frac{\exp\left(-u_2\bar{s}_2\vet\right)}{\left( E\exp\left(-u_2\bar{s}_2\gamma_dc_d^d(s_1)(t-T)^{d+1}/(d+1)\right)\right)^m}\\
&=
m!\prod_{i=1}^m\left(\frac{1}{t}\right)\frac{\exp\left(-u_2\bar{s}_2\gamma_dc_d^d(s_1)(t-t_i)^{d+1}/(d+1)\right)}{E\exp\left(-u_2\bar{s}_2\gamma_dc_d^d(s_1)(t-T)^{d+1}/(d+1)\right)},
\end{align*}
where $T$ is a uniform random variable on $[0,t]$. .
\end{proof}
 
The distribution in Lemma \ref{lemlem} is an exponential twist of the uniform  distribution. Note that if the conditioning was placed on the set $\set{\sigma_2=t}$ instead of $\{\sigma_2>t\}$, then the conditional distribution would no longer have product form because of the term
$
\frac{d}{dt}\vet,
$
 and the arrival times would not be the order statistics from an iid collection of random variables.

Next, we show that the random variable $M(t)$ is Poisson if conditioned on $\set{\sigma_2>t}$.
\begin{lemma}\label{lemma1}
Conditioned on $\{\sigma_2>t\},$ $ M(t) =_d \text{Pois} \ob{\lambda t \phi(t)}.$
\end{lemma}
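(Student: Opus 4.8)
The plan is to apply Bayes' rule and reduce everything to the joint probability $P(\sigma_2 > t, M(t) = m)$, which I can assemble from ingredients already produced in the proof of Lemma \ref{lemlem}. There it is shown that, for $0 < t_1 < \cdots < t_m < t$, combining the arrival density of the Poisson process $M(\cdot)$ with the conditional survival formula \eqref{eq:sigma_2_cond} gives
$$
P(\sigma_2 > t, \Ec) = \lambda^m e^{-\lambda t}\exp\!\ob{-u_2\bar{s}_2\, \vet},
$$
where $\vet = \frac{\gamma_d c_d^d(s_1)}{d+1}\sum_{i=1}^m (t-t_i)^{d+1}$. First I would integrate this density over the ordered simplex $\set{0 < t_1 < \cdots < t_m < t}$ to recover $P(\sigma_2 > t, M(t)=m)$.

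The key computational step is evaluating that simplex integral. Recognizing from \eqref{theta} that $u_2\bar{s}_2\gamma_d c_d^d(s_1)/(d+1) = \theta$, the survival factor becomes the symmetric product $\prod_{i=1}^m \exp\ob{-\theta(t-t_i)^{d+1}}$. Integrating a symmetric product over the ordered simplex equals $\tfrac{1}{m!}$ times its integral over the full cube $[0,t]^m$, so the integral factorizes and
$$
P(\sigma_2 > t, M(t) = m) = \frac{\lambda^m e^{-\lambda t}}{m!}\ob{\int_0^t e^{-\theta(t-s)^{d+1}}\, ds}^m.
$$
The substitution $r = t - s$ turns the inner integral into $\int_0^t e^{-\theta r^{d+1}}\, dr = t\phi(t)$ by the definition \eqref{phi} of $\phi$, yielding the clean form $P(\sigma_2 > t, M(t)=m) = e^{-\lambda t}\,(\lambda t \phi(t))^m/m!$.

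To finish, I would sum over $m$ to obtain the normalizing marginal $P(\sigma_2 > t) = e^{-\lambda t}\sum_{m\ge 0} (\lambda t\phi(t))^m/m! = e^{\lambda t(\phi(t)-1)}$, and then divide to get
$$
P(M(t) = m \mid \sigma_2 > t) = \frac{(\lambda t\phi(t))^m}{m!}\, e^{-\lambda t\phi(t)},
$$
which is exactly the $\mathrm{Pois}(\lambda t\phi(t))$ mass function. I do not anticipate a genuine obstacle: the argument is mechanical once the joint density from Lemma \ref{lemlem} is in hand. The only point requiring a little care is the simplex-to-cube factorization, and in particular verifying that the exponential-twist structure leaves the integrand a clean product in the $t_i$, so that the $1/m!$ symmetry reduction applies and the single-variable integral collapses to $t\phi(t)$.
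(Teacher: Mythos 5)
Your proposal is correct and follows essentially the same route as the paper: both compute the joint probability by integrating the product form $\lambda^m e^{-\lambda t}\prod_i e^{-\theta(t-t_i)^{d+1}}$ over $[0,t]^m$ with the $1/m!$ symmetry factor, collapse the inner integral to $t\phi(t)$, sum over $m$ to get $P(\sigma_2>t)=e^{\lambda t(\phi(t)-1)}$, and conclude by Bayes' rule. The only difference is cosmetic ordering (you form $P(\sigma_2>t, M(t)=m)$ first and then marginalize, while the paper computes the marginal first and then conditions).
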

\begin{proof}  First we note that 
\begin{equation}\label{hh1}
\begin{split}
P(\sigma_2>t)= &\sum_{m=0}^\infty \frac{1}{m!} \int_{[0,t]^m} P ( \sigma_2>t |\mathcal E_t (t_1,\ldots, t_m)) P(\mathcal E_t (t_1,\ldots, t_m)) dt_1\ldots dt_m  \\
=& \sum_{m=0}^\infty \frac{1}{m!} \int_{[0,t]^m} \exp\ob{-u_s \bar s_2 V_{\mathcal E_t}} \lambda^m e^{-\lambda t}dt_1\ldots dt_m \\
=&\sum_{m=0}^\infty \frac{1}{m!} t^m \lambda^m \ e^{-\lambda t}\ob{\frac{1}{t}\int_0^t \exp\ob{-\frac{u_2 \bar s_2  \gamma_d c_d^d(s_1) (t-r)^{d+1}}{d+1} }dr}^m  \\
=& \sum_{m=0}^\infty \frac{\ob{t \lambda \phi(t)}^m}{m!} e^{-\lambda t}= e^{t\lambda(\phi(t)-1)}.
\end{split}
\end{equation}
From this,  we find using Bayes' rule 
\begin{equation}\label{eq22}
\begin{split}
P(\mathcal E_t (t_1,\ldots, t_m) | \sigma_2 >t ) =& \frac{P(\sigma_2 >t|\mathcal E_t (t_1,\ldots, t_m) \, P(\mathcal E_t (t_1,\ldots, t_m))}{P (\sigma_2>t)}\\
=& \frac{\lambda ^m e^{-\lambda t} \exp(u_s \bar{s}_2 V_{\mathcal E_t)}}{e^{t \lambda( \phi(t)-1)}},
\end{split}
\end{equation}
and hence
\begin{align*}
P(M(t)= m| \sigma_2>t )=& \frac{1}{m!}\int_{[0,t]^m} P(\mathcal E_t(t_1,\ldots,t_m)| \sigma_2>t ) dt_1\ldots dt_m\\=& e^{-\lambda t \phi(t))} \, \frac{\ob{t\lambda \phi(t)}^m}{m!}.
\end{align*}.
\end{proof}

For subsequent considerations, it will be useful to define the two conditional probability measures $\hat{P}(\cdot)=P(\cdot|\sigma_2 =t)$ and $\tilde{P}(\cdot)=P(\cdot|\sigma_2>t)$, and their corresponding expected values,  $\hat{E}(\cdot)=E(\cdot|\sigma_2=t)$ and $\tilde{E}(\cdot)=E(\cdot|\sigma_2>t)$, respectively. In particular, we can compute the Radon-Nikodym derivative between these two measures. 
 \begin{lemma}\label{lem2} The Radon-Nikodym derivative of $\hat{P}$ with respect to $\tilde P$ is given by
 \begin{align}\label{RND}
\frac{d\hat{P}}{d\tilde{P}} = \frac{\aet u_2 \bar{s}_2}{\lambda (1 - e^{-\theta t^{d+1}})}.
\end{align}
 \end{lemma}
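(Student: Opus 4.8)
The plan is to realize both conditional measures $\hat P$ and $\tilde P$ as explicit reweightings of the law of the type-1 arrival process, and then read off the derivative as the ratio of their densities on the generating events $\Ec$. The essential input is the conditional law of $\sigma_2$ given the type-1 configuration up to time $t$. Conditioned on $\Ec$, the successful type-2 mutations form an inhomogeneous Poisson process whose intensity at time $s$ equals $u_2\bar s_2$ times the type-1 area $A(s)$; hence $\sigma_2$, its first point, has conditional survival function $P(\sigma_2>t\mid\Ec)=\exp(-u_2\bar s_2\vet)$, which is exactly \eqref{eq:sigma_2_cond}. Differentiating in $t$ and using $\tfrac{d}{dt}\vet=\aet$ (because $\tfrac{d}{dt}(t-t_i)^{d+1}/(d+1)=(t-t_i)^d$) yields the conditional density
\begin{align*}
P(\sigma_2\in dt\mid\Ec)=u_2\bar s_2\,\aet\,e^{-u_2\bar s_2\vet}\,dt.
\end{align*}

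Next I would apply Bayes' rule on $\Ec$ under each conditioning. Using $P(\Ec)=\lambda^m e^{-\lambda t}$ (as in the proof of Lemma \ref{lemlem}), the survival $P(\sigma_2>t)=e^{t\lambda(\phi(t)-1)}$ from \eqref{hh1}, and the density $P(\sigma_2\in dt)=\lambda e^{t\lambda(\phi(t)-1)}(1-e^{-\theta t^{d+1}})\,dt$ from Theorem \ref{pdfsig2}, I obtain
\begin{align*}
\hat P(\Ec)&=\frac{P(\sigma_2\in dt\mid\Ec)\,P(\Ec)}{P(\sigma_2\in dt)}=\frac{u_2\bar s_2\,\aet\,e^{-u_2\bar s_2\vet}\,\lambda^m e^{-\lambda t}}{\lambda e^{t\lambda(\phi(t)-1)}(1-e^{-\theta t^{d+1}})},\\
\tilde P(\Ec)&=\frac{P(\sigma_2>t\mid\Ec)\,P(\Ec)}{P(\sigma_2>t)}=\frac{e^{-u_2\bar s_2\vet}\,\lambda^m e^{-\lambda t}}{e^{t\lambda(\phi(t)-1)}},
\end{align*}
where I suppress the common differential $dt_1\cdots dt_m$ since it cancels in the quotient. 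Forming $\hat P(\Ec)/\tilde P(\Ec)$, the factors $\lambda^m e^{-\lambda t}$, $e^{-u_2\bar s_2\vet}$ and $e^{t\lambda(\phi(t)-1)}$ all cancel, leaving $d\hat P/d\tilde P=\aet\, u_2\bar s_2/[\lambda(1-e^{-\theta t^{d+1}})]$, which is \eqref{RND}.

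The step requiring care is the justification of the conditional density together with the bookkeeping of differentials. I would emphasize that $\Ec$ fixes the type-1 arrivals only on $[0,t]$, which is precisely enough to determine $A(s)$ for $s\le t$, and hence both the hazard $\aet$ at $t$ and the accumulated volume $\vet$; arrivals after $t$ are irrelevant to the event $\{\sigma_2=t\}$. To make the conclusion fully rigorous I would verify the claimed expression as a genuine change of measure, i.e.\ check that $\tilde E[(d\hat P/d\tilde P)\,h]=\hat E[h]$ for bounded functionals $h$ of the type-1 process; since $\aet$ is measurable with respect to that process and the events $\Ec$ generate its $\sigma$-algebra, the pointwise ratio above indeed identifies the derivative. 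Equivalently, one can phrase the verification through the unconditioned expectations $E[h\,u_2\bar s_2 A(t)e^{-u_2\bar s_2 V(t)}]$ and $E[h\,e^{-u_2\bar s_2 V(t)}]$, whose normalizations reproduce the same denominator $\lambda(1-e^{-\theta t^{d+1}})$.
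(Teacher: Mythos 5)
Your proposal is correct and follows essentially the same route as the paper: differentiate the conditional survival function $P(\sigma_2>t\mid\Ec)=\exp(-u_2\bar s_2\vet)$ to get $P(\sigma_2\in dt\mid\Ec)=u_2\bar s_2\,\aet\,e^{-u_2\bar s_2\vet}\,dt$, apply Bayes' rule to obtain $\hat P(\Ec)$ and $\tilde P(\Ec)$, and read off the ratio on the generating events. The extra remarks on justifying the hazard-rate computation and on verifying the change of measure against bounded functionals are sound but add nothing beyond what the paper's comparison of $P(\Ec\mid\sigma_2=t)$ with \eqref{eq22} already establishes.
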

 \begin{proof} First, note that
 \begin{align}\label{phat}
 P(\Ec|\sigma_2 =t) = \frac{P(\Ec) P(\sigma_2 =t | \Ec)}{P(\sigma_2 =t)}.
 \end{align}
By differentiating  \eqref{eq:sigma_2_cond} and \eqref{hh1} we obtain
$$
P(\sigma_2 =t |  \Ec) = u_2\bar{s}_2 \aet\exp(-u_2\bar{s}_2\vet)
$$
and
\begin{align}
\label{dens_sig2}
P(\sigma_2 \in  dt) &= -\frac{d}{dt}e^{t\lambda(\phi(t)-1)} = \lambda \ob{1- e^{-\theta t^{d+1}}}e^{t\lambda(\phi(t)-1)}.
\end{align} 
Hence (\ref{phat}) becomes
$$
 P(\Ec|\sigma_2 =t)  = \lambda^m e^{-\lambda t} \frac{ u_2 \bar{s}_2 \aet \exp(-u_2\bar{s}_2\vet )}{\lambda e^{t\lambda(\phi(t)-1)} (1-e^{-\theta t^{d+1}})},
$$
and comparing this to (\ref{eq22}) yields the desired result.
\end{proof}
 

Recall now that $M(t)$ is the number of successful type-1 mutations that have arrived by time $t$, and we denote their arrival times by $T_1,\ldots, T_{M(t)}$. At time $t$, the area of a clone created at  time $r<t$  is $\gamma_dc_d^d(s_1)(t-r)^d$, and hence the area of the $i$-th clone at time $t$ is given by the random variable
$$
X_i(t) \equiv \gamma_dc_d^d(s_1)(t-T_i)^d. 
$$
Using the above results together with definition \ref{sbp} of a size-biased pick we can now prove Theorem \ref{sbp_thm}.

\begin{proof}[Proof of Theorem \ref{sbp_thm}]
Using basic properties of conditional expectations and Definition \ref{sbp} we find
\begin{align*}
\hat{P}(X_{[1]}\in dx)&=\hat{E}\left[\hat P(X_{[1]}\in dx|X_1,\ldots,X_{M(t)},M(t))\right]\\
&=
\hat{E}\left[\sum_{i=1}^{M(t)}\frac{X_i1_{\{X_i\in dx\}}}{S_{M(t)}}\right]=\sum_{m=1}^{\infty}\hat{E}\left[\sum_{i=1}^{m}\frac{X_i1_{\{X_i\in dx\}}}{S_{m}}1_{\{M(t)=m\}}\right],
\end{align*}
where $S_m=X_1+\ldots+X_m$.
Using the Radon-Nikodym derivative (\ref{RND}) we can rewrite this as
\begin{equation}\label{fmla_sbdens}
\begin{split}
&=\sum_{m=1}^\infty \tilde{E} \left[  \frac{1_{\{M(t)=m\}}u_2 \bar{s}_2}{\lambda (1 - e^{-\theta t^{d+1}})} \left( \sum_{i=1}^m \frac{X_i  1_{\{ X_i \in dx \}}}{S_m}\right)   \sum_{j=1}^m X_j  \right] \\ 
&= \frac{u_2 \bar{s}_2}{\lambda (1 - e^{-\theta t^{d+1}})} \sum_{m=1}^\infty \tilde{E} \left[ 1_{\{ M(t)=m \}}\sum_{i=1}^m x 1_{\{ X_i \in dx \}}\right] \\ 
&= \frac{xu_2 \bar{s}_2}{\lambda (1 - e^{-\theta t^{d+1}})} \sum_{m=1}^\infty E\left[\sum_{i=1}^m 1_{\{X_i \in dx\}} | M(t) = m, \sigma_2 > t  \right] P(M(t)=m | \sigma_2 > t)   \\
&= \frac{xu_2 \bar{s}_2}{\lambda (1 - e^{-\theta t^{d+1}})} P(X_1 (t)\in dx | M(t) = m, \sigma_2 > t) E\left[ M(t) | \sigma_2 > t\right],
\end{split}
\end{equation}
where we have used the fact that $P(X_1(t) < x | M(t) = m, \sigma_2 > t)$ is independent of $m$, which we will show below. 
Using Lemma \ref{iid_ti} and differentiating the cumulative distribution function
\begin{align*}
P(X_1(t) < x | M(t) = m, \sigma_2 > t) &= P\left(T_1 > t - \left( \frac{x}{\gamma_d c_d^d(s_1)} \right)^{1/d} \Big| M(t) =m, \sigma_2 > t\right),
\end{align*}
we determine that 
\begin{align}
\label{cond_Xi}
P(X_1 (t)\in dx | M(t) = m, \sigma_2 > t) =\frac{x^{1/d - 1}}{d \gamma_d^{1/d} c_d(s_1)t  \phi(t)} \exp\left[\frac{-u_2\bar{s}_2 x^{\frac{d+1}{d}}}{(d+1)\gamma_d^{1/d} c_d(s_1) } \right] \equiv g_t(x)
\end{align}
for $x \in [0, \gamma_d c_d^d(s_1) t^d]$.  Note that (\ref{cond_Xi}) is indeed independent of $m$.  From Lemma \ref{lemma1} it follows that 
$$
E\cb{M(t)|\sigma_2>t} = \lambda t\phi(t),
$$
and combined with \eqref{fmla_sbdens} and  \eqref{cond_Xi} this yields the desired result.
\end{proof}

\subsection{Proof of Theorem \ref{coro1}}\label{pr5}
Using Definition \ref{sbp} of a size-biased pick we find
\begin{align*}
\hat{P}&(\tilde{X}_1 \in dx_1, \ldots, \tilde{X}_{M(t)-1} \in dx_{M(t)-1})\\
&=\hat{E}[\hat P(\tilde{X}_1 \in dx_1, \ldots, \tilde{X}_{M(t)-1} \in dx_{M(t)-1} | X_1, \ldots, X_{M(t)}, M(t))]\\
&= \hat{E} \left[ \sum_{j=1}^{M(t)} \frac{X_j}{S_{M(t)}} \prod_{i=1}^{M(t)-1}1_{ \{ X_{\alpha_j(i)} \in dx_i\} } \right]\\
&= \frac{u_2 \bar{s}_2}{\lambda (1 - e^{-\theta t^{d+1}})} \sum_{m=1}^\infty P(M(t)=m | \sigma_2 > t) E\left[\sum_{j=1}^m X_j  \prod_{i=1}^{m-1}1_{ \{ X_{\alpha_j(i)} \in dx_i\} } \Big| \sigma_2 > t, M(t)=m \right],
\end{align*}
where the final equality follows from the same sequence of arguments as used in the proof of  Theorem \ref{sbp_thm}.  Next, we note that 
\begin{align*}
E[X_j(t)| \sigma_2 > t, M(t)=m ] = & \int_0^\infty x P(X_j (t)\in dx | M(t) = m, \sigma_2 > t)= \int_0^\infty x \, g_t(x) dx \\ 
= & \int_0^{\gamma_d c_d^d(s_1) t^d} \frac{x^{1/d }}{d \gamma_d^{1/d} c_d(s_1) \phi(t) t} \exp\left[\frac{-u_2\bar{s}_2 x^{\frac{d+1}{d}}}{(d+1)\gamma_d^{1/d} c_d(s_1) } \right] dx\\
=& \frac{1}{\phi(t)t u_2 \bar s_2}\cb{1-\exp\ob{-\frac{u_2 \bar s_2 \gamma_d c_d^d(s_1) t^{d+1}}{d+1}}}, 
\end{align*}
and 
$$
 \sum_{j=1}^mE\left[ X_j  \prod_{i=1}^{m-1}1_{ \{ X_{\alpha_j(i)} \in dx_i\} } \Big| \sigma_2 > t, M(t)=m \right] = \sum_{j=1}^m E[X_j| \sigma_2 > t, M(t)=m ] \prod_{i=1}^{m-1} g_t(x_i).
$$
Together with Lemma \ref{lemma1} the result follows.

\subsection{Proof of Proposition \ref{clone_dist}}\label{pr6}
First, we use Bayes' rule to find
\begin{align}
P(\Ecs| \sigma_2=t)= \frac{P(\sigma_2\in dt| \Ecs) \, P(\Ecs)}{P(\sigma_2\in dt)}.
\end{align}
Since $P(\sigma_2\in dt)$ is given in (\ref{dens_sig2}) and $P(\Ecs)=\lambda^me^{-\lambda \zeta}$, it remains to calculate $P(\sigma_2\in dt| \Ecs)$. It is easy to see that 
\begin{align}\label{ja}
P(\sigma_2>t|\Ecs)= \exp\ob{-u_2 \bar s_2 \vet} q(\zeta,t),
\end{align}
where $q(\zeta,t)$ is the probability that a type-2 mutation arises in a clone that is born in the interval $(\zeta,t)$. We find  
\begin{align*}
 q(\zeta,t)= &E\cb{e^{- \theta \sum_{i=1}^{M(t-\zeta)} (t-T_i)^{d+1}}}\\
 =& E\cb{E \cb{\left. e^{- \theta \sum_{i=1}^{M(t-\zeta)}(t-T_i)^{d+1}}\right | M(t-\zeta)}}\\
 = &E \cb{\phi(t-\zeta)^{M(t-\zeta)}}=e^{\lambda(t-\zeta) (\phi(t-\zeta)-1)},
\end{align*}
where the last expression is the generating function for the Poisson process. Together with (\ref{ja}) this yields now
\begin{align*}
P(\sigma_2\in dt | \Ecs)=& - \frac{d}{dt}P(\sigma_2>t|\Ecs)\\
=& e^{\lambda(t-\zeta) (\phi(t-\zeta)-1)}e^{-u_2 \bar s_2 \vet } \, \cb{u_s \bar s_2 \aet+ \lambda\ob{1- e^{-\theta (t-\zeta)^{d+1}}}}
\end{align*}
Together with (\ref{dens_sig2}) and (\ref{hier}), we find now
\begin{align*}
P(\Ecs| \sigma_2=t)= \lambda^{m-1}  \frac{e^{-\lambda \cb{t\phi(t))-(t-\zeta) \phi(t-\zeta)}}}{ \ob{1-e^{-\theta t^{d+1}}}} e^{-u_2 \bar s_2 \vet } \cb{u_s \bar s_2 \aet+ \lambda\ob{1- e^{-\theta (t-\zeta)^{d+1}}}},
\end{align*}
and hence performing the integration in
\begin{align*}
\hat  P (M(\zeta)=m)= \int_{[0,\zeta]^m} \frac{1}{m!} P(\Ecs | \sigma_2=t) dt_1\ldots dt_m
\end{align*}
yields the desired result.

\subsection{Proof of Corollary \ref{cori}}\label{pr9}
 \begin{align*}
 \hat P(T_R^f>\tau)=& P\ob{T_R^f>\tau| \sigma_2 =t}\\
 =& \int_0^{c_d(s_1) t} P(T_R^f>\tau, R_{l}(\sigma_2) \in  dr| \sigma_2=t)dr\\
 =&\int_0^{c_d(s_1) t} P(T_R^f>\tau | R_{l}(\sigma_2) \in  dr, \sigma_2=t) P(R_l (\sigma_2) \in dr |\sigma_2 =t) dr,
 \end{align*}
 where $R_l(t)$ is the radius of the local field surrounding the tumor at time $t$. The result follows from 
 \begin{align}
 P(T_R^f>\tau | R_{l}(\sigma_2) \in dr, \sigma_2 =t) = \exp\ob{-\int_0^\tau \eta(r,s)ds}
 \end{align}
and the  conditional density of $R_l(\sigma_2)$ in Corollary \ref{Cor_r}.

\subsection{Proof of Corollary \ref{dist_recur}}\label{pr10}
First, we note that
\begin{align}\label{nomau}
\begin{split}
&P(T_R^p>\tau | M(t)= m, \sigma_2=t) \\
 & \qquad =\int_{\mathbb{R}_{+}^{m-1}} P\ob{T_R^p>\tau | \tilde R_1\in dr_1, \ldots, \tilde R_{m-1} \in dr_{m-1}, M(t)=m, \sigma_2=t}\cdots \\
 & \qquad \qquad \cdots  P\ob{ \tilde R_1\in dr_1, \ldots, \tilde R_{m-1} \in dr_{m-1}| M(t)=m, \sigma_2=t} ,
 \end{split}
\end{align}
where $\tilde R_i$ are the radii of the distant field clones, corresponding to their respective areas $\tilde X_i$ defined in Section \ref{distfieldinit}. Recalling the definition of $\eta$ in (\ref{this}), we find
\begin{align}\label{momou}
P\ob{T_R^p>\tau | \tilde R_1\in dr_1, \ldots, \tilde R_{m-1} \in dr_{m-1}, M(t)=m, \sigma_2 \in dt} = \exp\ob{-\sum_{i=1}^{m-1} \int_0^\tau \eta \ob{r_i,s} ds}.
\end{align}
Recalling the Radon-Nikodym derivative $d\hat P/d\tilde P$ from Lemma \ref{lem2}, it is straight-forward to verify that 
\begin{align*}
\frac{dP\ob{t_1, \ldots t_m | M(t)=m, \sigma_2=t}}{dP\ob{t_1, \ldots t_m | M(t)=m, \sigma_2>t} } =&\frac{d\hat P}{d\tilde P}\,\frac{P(M(t)=m | \sigma_2>t) }{P(M(t)=m | \sigma_2 =t )} = \frac{\aet u_2 \bar{s}_2 t \phi(t)}{ m(1 - e^{-\theta t^{d+1}})},
\end{align*}
which allows us to derive the following expression (proceeding as in the proof of Corollary \ref{coro1}),
\begin{align*}
& P\ob{ \tilde X_1\in dx_1, \ldots, \tilde X_{m-1}\in dx_{m-1} | M(t)=m, \sigma_2=t}= \prod_{i=1}^{m-1} g(x_i)dx_i.
\end{align*}
Switching from the clone-areas $\tilde X_i$ back to the corresponding radii $\tilde R_i$, we find 
\begin{align*}
 P&\ob{ \tilde R_1\in dr_1, \ldots, \tilde R_{m-1}\in dr_{m-1} | M(t)=m, \sigma_2 =t} = 
 (d\,\gamma_d)^{m-1}   \prod_{i=1}^{m-1} r_i^{d-1}g(r_i ^d \gamma_d) dr_i
\end{align*}
From this, (\ref{momou}) and (\ref{nomau}) we find
\begin{align}
P\ob{T_R^p > \tau | M(t)=m, \sigma_2=t}=\ob{d\, \gamma_d\, \Phi(\tau,t)}^{m-1},
\end{align}  
Finally, using Lemma \ref{lemma1}, 
\begin{align*}
\hat P(T_R^p >\tau)& = \sum_{m=1}^\infty P\ob{T_R^p > \tau | M(t)=m, \sigma_2=t} \hat P(M(t)=m)\\
&= \exp\ob{-\lambda t \phi(t) \ob{1-d \gamma_d \Phi(\tau,t)}}
\end{align*}

\end{document}